	\newtheorem{corollary}{Corollary}[]
	\newtheorem{lemma}{Lemma}[]
	\newtheorem{proposition}{Proposition}[]
	\newtheorem{remark}{Remark}[]
	\newtheorem{theorem}{Theorem}[]
	\newtheorem{observation}{Observation}[]
\patchcmd\algocf@Vline{\vrule}{\vrule \kern-0.4pt}{}{}
\patchcmd\algocf@Vsline{\vrule}{\vrule \kern-0.4pt}{}{}
\definecolor{darkgrey}{gray}{0.3}
\definecolor{commentcolor}{gray}{0.5}
\crefname{algocf}{Algorithm}{Algorithms}
\newcommand{\linkfootnote}[1]{%
    \protect\hyperref[#1]{%
        \hbox{\textsuperscript{\normalfont\ref*{#1}}}%
    }%
}
\newcommand{\colorrule}[1]{%
    \arrayrulecolor{#1}\midrule\arrayrulecolor{black}%
}
\newcommand{\defeq}{\mathrel{:\mkern-0.25mu=}}
\newcommand{\cX}{\mathcal{X}}
\newcommand{\cJ}{\mathcal{J}}
\newcommand{\cK}{\mathcal{K}}
\newcommand{\cR}{\mathcal{R}}
\newcommand{\bbR}{\mathbb{R}}
\newcommand{\bbP}{\mathbb{P}}
\newcommand{\bbE}{\mathbb{E}}
\newcommand{\R}{\mathbb{R}}
\renewcommand{\vec}[1]{\bm{#1}}
\newcommand{\emptyseq}{\varnothing}
\newcommand{\Rp}{\bbR_{\ge 0}}
\title{Model-Free Online Learning in Unknown Sequential Decision Making Problems and Games}
\author{
Gabriele Farina,\textsuperscript{\rm 1}
Tuomas Sandholm\textsuperscript{\rm 1,\rm 2,\rm 3,\rm 4}\\
}
\g@addto@macro \normalsize {%
 \addtolength\abovedisplayskip{-3pt}%
 \addtolength\belowdisplayskip{-3pt}%
}
\LetLtxMacro{\baseproof}{\proof}
\LetLtxMacro{\endbaseproof}{\endproof}
\newcommand{\runinsec}[1]{\noindent\textbf{#1}\quad}
\newcommand{\pihat}{\hat{\vec{\pi}}}
\newcommand{\bbone}{\bm{1}}
\newcommand\numberthis[1]{\addtocounter{equation}{1}\tag{\theequation}\label{#1}}
        \newcommand*\patchAmsMathEnvironmentForLineno[1]{%
          \expandafter\let\csname old#1\expandafter\endcsname\csname #1\endcsname
          \expandafter\let\csname oldend#1\expandafter\endcsname\csname end#1\endcsname
          \renewenvironment{#1}%
                           {\linenomath\csname old#1\endcsname}%
                           {\csname oldend#1\endcsname\endlinenomath}%
        }%
        \newcommand*\patchBothAmsMathEnvironmentsForLineno[1]{%
          \patchAmsMathEnvironmentForLineno{#1}%
          \patchAmsMathEnvironmentForLineno{#1*}%
        }%
\begin{document}
    \maketitle

    \begin{abstract}
        Regret minimization has proved to be a versatile tool for tree-form sequential decision making and extensive-form games. In large two-player zero-sum imperfect-information games, modern extensions of counterfactual regret minimization (CFR) are currently the practical state of the art for computing a Nash equilibrium. Most regret-minimization algorithms for tree-form sequential decision making, including CFR, require
        (i) an exact \emph{model} of the player's decision nodes, observation nodes, and how they are linked, and
        (ii) full knowledge, at all times $t$, about the payoffs---even in parts of the decision space that are not encountered at time $t$.
        Recently, there has been growing interest towards relaxing some of those restrictions and making regret minimization applicable to settings for which reinforcement learning methods have traditionally been used---for example, those in which only black-box access to the environment is available. We give the first, to our knowledge, regret-minimization algorithm that guarantees sublinear regret with high probability even when requirement (i)---and thus also (ii)---is dropped. We formalize an online learning setting in which the strategy space is not known to the agent and gets revealed incrementally whenever the agent encounters new decision points. We give an efficient algorithm that achieves $O(T^{3/4})$ regret with high probability for that setting, even when the agent faces an adversarial environment. Our experiments show it significantly outperforms the prior algorithms for the problem, which do not have such guarantees.
    It can be used in any application for which regret minimization is useful: approximating Nash equilibrium or quantal response equilibrium, approximating coarse correlated equilibrium in multi-player games, learning a best response, learning safe opponent exploitation, and online play against an unknown opponent/environment.
    \end{abstract}

    \section{Introduction}


A \emph{tree-form sequential decision making (TFSDM)} problem
formalizes in a tree-form structure the interaction of an agent with an unknown and potentially adversarial environment. The agent's tree includes decision nodes, observation nodes, and terminal nodes. TFSDM captures the problem that a player faces in an extensive-form game. TFSDM also captures MDPs and POMDPs where the agent conditions on observed history, but TFSDM problems are more general because the Markovian assumption is not necessarily made.

In TFSDM, the environment may react adversarially to what the agent does. This is important to take into account, for example, in game-theoretic settings and \emph{multiagent reinforcement learning (MARL)} because the other agents' learning makes the environment nonstationary for the agent~\citep{Sandholm96:Multiagent,Matignon12:Independent}. This is in contrast to the standard assumption in single-agent reinforcement learning where the environment is \emph{oblivious} to the agent instead of adversarial. Hence, learning strong \emph{policies} (aka. \emph{strategies}) in TFSDM problems is especially challenging, and the agent must be careful about exploration because exploration actions can change the environment.

Online, regret minimization methods have been successfully used in TFSDM. In particular, the \emph{counterfactual regret minimization (CFR)} framework decomposes the overall regret of an agent to local regrets at individual decision nodes (aka. information sets in game theory)~\cite{Zinkevich07:Regret}. That enables significantly larger TFSDMs to be tackled. Many enhancements have been developed on top of the basic CFR framework~\cite{Lanctot09:Monte,Tammelin14:Solving,Brown15:Regret,Brown17:Dynamic,Brown17:Reduced,Brown19:Solving,Farina19:Online}, and have led to major milestones in im\-per\-fect-information games such as poker~\cite{Bowling15:Heads,Moravvcik17:DeepStack,Brown17:Superhuman,Brown19:Superhuman}.
Many of those methods \emph{guarantee} low regret even against an adversarial environment---which, in turn, enables the computation of game-theoretic solutions such as Nash equilibrium, coarse correlated equilibrium~\citep{Moulin78:Strategically,Celli19:Learning}, best responses, etc.

However, those methods usually come with two drawbacks:
(i) they require an explicit upfront \emph{model} of the agent's decision space, and
(ii) depending on the online learning model used, they require full feedback, at all times $t$, about the payoffs assigned by the environment---even in parts of the decision space not encountered at time $t$.
There has been work towards an online learning setting, called the \emph{bandit optimization setting}, that drops (ii)~\cite{Lattimore20:Bandit}. Most MARL algorithms apply for the unknown game setting and drop both (i) and (ii), often 
at the cost of invalidating any regret guarantees. 
In this paper, we give, to our knowledge, the first regret-minimization algorithm that guarantees sublinear (specifically $O(T^{3/4} \sqrt{\log 1/p})$ with probability $1-p$) regret even when requirements (i) and (ii) are both dropped.

Conceptually, our algorithm has elements of both online bandit optimization and MARL.
On the one hand, our regret guarantees hold with high probability no matter how the environment picks its actions or assigns utilities to terminal nodes at all iterations $t$. In fact, our algorithm is a regret minimizer in the proper online-learning sense: it does not need to know---and makes no assumptions about---the underlying policy of the environment or the utility at the terminal states. Furthermore, those quantities are not assumed to be time independent and they can even be selected adversarially by the environment at each iteration. This is contrast with many \emph{self-play} methods, that require control over the opponent in order to retain guarantees. In particular, because we assume no control over the adversarial environment, every interaction with the environment can lead it to react and change behavior in the next iteration. So, it is impossible to ``freeze'' the policy of the environment to perform off-policy exploration like many self-play methods require. 
Because of its strong online guarantees, our algorithm can be used for all the applications in which regret minimization provides benefits---for example, to converge to Nash equilibrium in a two-player zero-sum extensive-form game, to learn a best response against static opponents, to converge to coarse correlated equilibrium in a multiagent setting~\cite{Moulin78:Strategically,Celli19:Learning}, to converge to a quantal-response equilibrium~\cite{Ling18:What,Farina19:Online}, to compute safe exploitative strategies~\citep{Farina19:Online,Ponsen11:Computing}, or to play online against an unknown opponent/environment.

On the other hand, ours is \emph{not} an algorithm for the online bandit optimization problem. In bandit optimization, the agent does \emph{not} interact with the environment: at all times $t$, the agent outputs a policy $\vec{\pi}^t$ for the whole decision space, and receives a single real number $u^t$, representing the gain incurred by $\vec{\pi}^t$, as feedback. Instead, our algorithm operates within a slightly different online learning model that we introduce, which we call the \emph{interactive bandit model}. In it, the decision maker gets to observe the signals (actions) selected by the environment on the path from the root of the decision problem to the agent's terminal state, in additions to $u^t$. Hence, we operate within a more specific 
online learning model than bandit optimization (which applies to, for example, picking a point on a sphere also, as opposed to just TFSDM), and rather one that is natural to TFSDM. This comes with a significant advantage. While, to our knowledge, all algorithms for the bandit optimization problem require \emph{a priori} knowledge of the full structure of the tree-from sequential decision problem,\footnote{This is needed, for example, to construct a self-concordant barrier function~\cite{Abernethy08:Competing} or a global regularizer for the strategy space~\cite{Abernethy09:Beating,Farina20:Counterfactual}, to compute a barycentric spanner~\cite{Awerbuch04:Adaptive,Bartlett08:High}, or to set up a kernel function~\cite{Bubeck17:Kernel}.} our algorithm for the interactive bandits model is \emph{model free}. Here, the structure of the tree-form sequential decision process is at first unknown and has to be discovered by exploring as part of the decision-making process. Decision and observation nodes are revealed only at the time the agent encounters them for the first time.
%

\subsection{Closely Related Research}
In the rest of this section we discuss how our algorithm relates to other attempts to connect online learning guarantees with model-free MARL. A comparison between our algorithm and other settings in online learning is deferred to \cref{sec:online learning}, where we formally introduce our interactive bandit model.

The greatest inspiration for our algorithm is the \emph{online variant of Monte Carlo CFR (online MCCFR)} algorithm proposed in passing by~\citep{Lanctot09:Monte}. Unlike the traditional ``self-play'' MCCFR, \emph{online} MCCFR does not assume that the algorithm have control over the environment.
The authors note that in theory online MCCFR could be used to play games in a model-free fashion, provided that a lower bound on the reach probability of every terminal state can be enforced. That lower bound, say $\eta$, is necessary for them to invoke their main theorem, which guarantees $O(\eta^{-1} T^{1/2} \sqrt{1/p})$ regret with probability $1-p$.
They suggest perhaps using some form of $\epsilon$-greedy exploration at each decision node to enforce the lower bound, but no guarantee is provided and the authors then move away from this side note to focus on the self-play case.
We show that their proposed approach encounters significant hurdles. First, using  exploration at each decision node results in a lower bound on the reach of every terminal state on the order of $\eta = \epsilon^d$, where $d$ is the depth of the decision process, thus making the regret bound not polynomial in the size of the decision process itself, but rather exponential. Second, the paper did not provide theoretical guarantees for the online case. In particular, the theory does not take into account the degredation effect caused by the exploration itself, which scales roughly as $\eta T$. So, on the one hand, a large $\eta$ is needed to keep the term $\eta^{-1}$ in their regret bound under control, but at the same time a large $\eta$ results in a $\eta T$ term being added to the regret. These hurdles show that it is unlikely that their approach can lead to $O(T^{1/2}\sqrt{1/p})$ regret with high probability $1-p$ as they hypothesized. We address those issues by using a different type of exploration and being careful about bounding the degradation terms in the regret incurred due to the exploration. Because of the latter, our algorithm incurs $O(T^{3/4}\sqrt{\log(1/p)})$ regret with high probability against adversarial opponents. Because the exponent is less than 1, ours is truly a regret minimization algorithm for TFSDMs with unknown structure, and to our knowledge, the first. At the same time, our exponent is worse than the originally hypothesized exponent 1/2. It is unknown whether the latter can be achieved. Finally, our dependence on $p$ is better than in their hypothesized regret bound.

A recent paper by~\citet{Srinivasan18:Actor} related policy gradient algorithms to CFR (and, to a much lesser degree, MCCFR). Despite their experiments demonstrating empirical convergence rates for sampled versions of their algorithms, formal guarantees are only obtained for tabularized policy iteration in self-play, and use policy representations that require costly $\ell_2$ projections back into the policy space. In contrast, our regret guarantees hold (i) in any TFSDM setting (not just two-player zero-sum extensive-form games), (ii) in high probability, (iii) with sampling, and (iv) when playing against any environment, even an adversarial one, without requiring complex projections.

A very recent paper~\citep{Zhou20:Posterior} on two-player zero-sum games proposes combining full-information regret minimization with posterior sampling~\citep{Strens00:Bayesian} to estimate the utility function of the player and transition model of chance, both of which they assume to be time independent unlike our setting. They show in-expectation bounds under the assumption that the agent's strategy space is known \emph{ex ante}.
We operate in a significantly more general setting where the observations and utilities are decided by the environment and can change---even adversarially---between iterations.
Like theirs, our algorithm converges to Nash equilibrium in two-player zero-sum games when used in self play. However, unlike theirs, our algorithm is a regret minimizer that can be used for other purposes also, such as finding a coarse correlated equilibrium in multiplayer general-sum games, a quantal-response equilibrium, or safe exploitative strategies. In the latter two applications, the payoff function, in effect, changes as the agent changes its strategy. Our regret guarantees hold in high probability and we do not assume \emph{ex ante} knowledge of the agent's strategy space.

A different line of work has combined fictitious play \cite{Brown51:Iterative}
with deep learning for function approximation~\citep{Heinrich15:Fictitious,Heinrich16:Deep}. Those methods do not give regret guarantees.
Finally, other work has studied how to combine the guarantees of online learning 
with MDPs. \citet{Kash19:Combining} combine the idea of breaking up and minimizing regret locally at each decision point, proper
of CFR, with Q-learning, obtaining an algorithm with certain in-the-limit guarantees for MDPs.
\citet{Even09:Online} study online optimization (in the full-feedback setting, as opposed to bandit) in general MDPs where the reward function and the structure of the MDP is known. \citet{Neu10:Online} study online bandit optimization in MDPs in the \emph{oblivious} setting, achieving $O(T^{2/3})$ regret with high probability, again assuming that the MDP's structure is known and certain conditions are met. \citet{Zimin13:Online} give bandit guarantees for episodic MDPs with a fixed known transition function.

    \section{Our Model for (Unknown) Tree-Form Sequential Decision Making and Games}

In this section, we introduce the notation for TFSDM problems that we will be using for the rest of the paper. 

A \emph{tree-form sequential decision making (TFSDM) problem} is structured as a tree made of three types of nodes: (i) \emph{decision nodes} $j$, in which the agent acts by selecting an action from the finite set $A_j$ (different decision nodes can admit different sets of allowed actions); (ii) \emph{observation points} $k$, where the agent observes one out of set $S_k$ of finitely many possible signal from the environment (different observation points can admit different sets of possible signals); and (iii) \emph{terminal nodes}, corresponding to the end of the decision process. We denote the set of decision nodes in the tree-form sequential decision making problem as $\cJ$, the set of observation points as $\cK$, and the set of terminal nodes as $Z$. Furthermore, we let $\rho$ denote the dynamics of the game: selecting action $a \in A_j$ at decision node $j\in\cJ$ makes the process transition to the next state $\rho(j,a) \in \cJ \cup \cK \cup Z$, while the process transitions to $\rho(k,s) \in \cJ \cup \cK \cup Z$ whenever the agent observes signal $s \in S_k$ at observation point $k \in \cK$.


Our algorithm operates in the difficult setting where the structure of the TFSDM problem is at first unknown and can be discovered only through exploration. Decisions and observation nodes are revealed only at the time the agent encounters them for the first time. As soon as a decision node $j$ is revealed for the first time, its corresponding set of actions $A_j$ is revealed with it.

\runinsec{Sequences}
In line with the game theory literature, we call a \emph{sequence} a decision node-action pair; each sequence $(j,a)$ uniquely identifies a path from the root of the decision process down to action $a$ at decision node $j$, included. Formally, we define the set of sequences as $\Sigma \defeq \{(j, a): j\in\cJ, a \in A_{j}\} \cup \{\emptyseq\}$,  where the special element $\emptyseq$ is called the \emph{empty sequence}.
Given a decision node $j \in \cJ$, its \emph{parent sequence}, denoted $p_j$, is the last sequence (that is, decision node-action pair) encountered on the path from the root of the decision process down to $j$. If the agent does not act before $j$ (that is, only observation points are encountered on the path from the root to $j$), we let $p_j = \emptyseq$.

Given a terminal node $z \in Z$ and a sequence $(j, a) \in \Sigma$, we write $(j,a) \leadsto z$ to mean that the path from the root of the decision process to $z$ passes through action $a$ at decision node $j$. Similarly, given a terminal node $z \in Z$ and an observation node-signal pair $(k, s)$ ($s \in S_k$), we write $(k,s) \leadsto z$ to indicate that the path from the root of the decision process to $z$ passes through signal $s$ at observation node $k$. Finally, we let $\sigma(z)$ be the last sequence (decision node-action pair) on the path from the root of the decision process to terminal node $z \in Z$.

\runinsec{Strategies}
Conceptually, a strategy for an agent in a tree-form sequential decision process specifies a distribution $\vec{x}_j \in \Delta^{|A_j|}$ over the set of actions $A_j$ at each decision node $j \in \cJ$. We represent a strategy using the \emph{sequence-form representation}, that is, as a vector $\vec{q} \in \Rp^{|\Sigma|}$ whose entries are indexed by $\Sigma$. The entry $q[j,a]$ contains the product of the probabilities of all actions at all decision nodes on the path from the root of the process down to and including action $a$ at decision node $j\in \cJ$. A vector $\vec{q} \in\Rp^{|\Sigma|}$ is a valid sequence-form strategy if and only if it satisfies the constraints
(i) $\sum_{a\in A_j} q[j,a] = q[p_j]$ for all $j\in \cJ$; and (ii) $x[\emptyseq] = 1$~\citep{Romanovskii62:Reduction,Koller94:Fast,Stengel96:Efficient}.
We let $Q$ denote the set of valid sequence-form strategies.
Finally, we let $\Pi \subseteq Q$ denote the subset of sequence-form strategies whose entries are only $0$ or $1$; a strategy $\vec{\pi} \in \Pi$ is called a \emph{pure} sequence-form strategy, as it assigns probability $1$ to exactly one action at each decision node. 

\section{Online Learning and Our Interactive Bandit Model}\label{sec:online learning}


In online learning, an agent interacts with its environment in this order:
    (i) The environment chooses a (secret) gradient vector $\vec{\ell}^t$ of bounded norm;
    (ii) The agent picks a pure strategy $\vec{\pi}^t \in \Pi$. The environment evaluates the reward (gain) of the agent as $(\vec{\ell}^t)^{\!\top}\! \vec{\pi}^t \in \bbR$;
     (iii) The agent observes some feedback about her reward. The feedback is used by the agent to learn to output good strategies over time.
    %
The learning is online in the sense that the strategy $\vec{\pi}^t$ at time $t$ is output before any feedback for it (or future strategies) is available. A standard quality metric for evaluating an agent in this setting is the \emph{regret} that she accumulates over time:
$
    R^T(\hat{\vec{\pi}}) \defeq \sum_{t=1}^T (\vec{\ell}^t)^{\!\top}\hat{\vec{\pi}} - \sum_{t=1}^T (\vec{\ell}^t)^{\!\top}\vec{\pi}^t$.
This measures the difference between the total reward accumulated up to time $T$, compared to the reward that would have been accumulated had the oracle output the fixed strategy $\hat{\vec{\pi}} \in \Pi$ at all times. A ``good'' agent, that is, a \emph{regret minimizer}, is one whose regret grows sublinearly: $R^T(\hat{\vec{\pi}}) = o(T)$ for all $\hat{\vec{\pi}} \in \Pi$.



Online learning models vary based on the type and extent of feedback that is made available to the agent.
We will focus on two existing models---namely, the full-feedback\footnote{\label{fn:full information}In online learning, ``full-feedback'' is typically called ``full-information''. We use ``full-feedback'' to avoid confusion with full-information games, 
that is, games where the full state is available to all players at all times.} setting and the bandit linear optimization setting. Then we will introduce a third model that is especially natural for TFSDM.

\vspace{-3mm}
\paragraph{Full-Feedback\linkfootnote{fn:full information} Setting} Here, the environment always reveals the full gradient vector $\vec{\ell}^t$ to the agent (after the strategy has been output). This is the same setting that was proposed in the landmark paper by~\citet{Zinkevich03:Online} and is the most well-studied online optimization setting. Efficient agents that guarantee $O(T^{1/2})$ regret with high probability in the full-feedback setting are known well beyond TFSDM and extensive-form games (e.g, \citep{Shalev-Shwartz12:Online,Hazan16:Introduction}).
In fact, given any convex and compact set $\cX$, it is possible to construct an agent that outputs decisions $\vec{x}^t \in \cX$ that achieves $O(T^{1/2})$ regret \emph{in the worst case}, even when the gradient vectors $\vec{\ell}^t$ are chosen adversarially by the environment \emph{after} the decision $\vec{x}^t$ has been revealed.
In the specific context of TFSDM and extensive-form games, the most widely-used oracle in the full-feedback setting is based on the counterfactual regret (CFR) framework~\citep{Zinkevich07:Regret}. The idea is to decompose the task of computing a strategy for the whole decision process into smaller subproblems at each decision point. The local strategies are then computed via $|\cJ|$ independent full-feedback regret minimizers, one per decision node, that at each $t$ observe a specific feedback that guarantees low global regret across the decision process. CFR guarantees $O(T^{1/2})$ regret with high probability against any strategy $\vec{\pi} \in \Pi$.


\vspace{-3mm}
\paragraph{Bandit Linear Optimization} Here, the only feedback that the environment reveals to the agent is the utility $(\vec{\ell}^t)^{\!\top} \vec{\pi}^t$ at each time $t$~\citep{Kleinberg04:Nearly,Flaxman05:Online}. Despite this extremely limited feedback, $\tilde{O}(T^{1/2})$ regret can still be guaranteed with high probability in some domains (including simplexes~\citep{Auer02:Nonstochastic} and spheres~\citep{Abernethy09:Beating}), although, to our knowledge, a polynomial algorithm that guarantees $\tilde{O}(T^{1/2})$ regret with high probability for any convex and compact domain has not been discovered yet.\footnote{Several algorithms are able to guarantee one or two out of the three requirements (i) applicable to any convex domain, (ii) polynomial time per iteration, (iii) $\tilde{O}(T^{1/2})$ regret with high probability. For example, \citet{Bartlett08:High} achieve (i) and (iii) by extending an earlier paper by~\citet{Dani08:Price}, and \cite{Gyorgy07:OnLine} achieves (ii) for the set of flows with suboptimal regret guarantees.} Guaranteeing $\tilde{O}(T^{1/2})$ \emph{in expectation} is possible for any domain of decisions~\citep{Abernethy08:Competing}, but unfortunately in-expectation low regret is not strong enough a guarantee to enable, for instance, convergence to Nash or correlated equilibrium as described above. In the specific case of tree-form sequential decision processes, \citep{Farina20:Counterfactual} proposed a bandit regret minimizer that achieves $O(T^{1/2})$ regret in expectation compared to any policy and linear-time iterations. Upgrading to in-high-probability $O(T^{1/2})$ regret guarantees while retaining linear-time iterations remains an open problem.

\vspace{-3mm}
\paragraph{Interactive Bandit} We propose \emph{interactive bandits} as a natural online learning model to capture the essence of tree-form sequential decision processes. Here, an agent interacts with the environment until a terminal state is reached, at which point the payoff (a real number) is revealed to the agent. The agent observes the environment's action (signal) whenever the interaction moves to an observation point. We formalize this as an online learning model as follows. At all times $t$, before the agent acts, the environment privately selects (i) a choice of payoff $u^t : Z \to \bbR$ for each terminal state $z \in Z$, and (ii) a secret choice of signals $s^t_k \in S_k$ for all observation points $k \in \cK$. These choices are hidden, and only the signals relevant to the observation points reached during the interaction will be revealed. Similarly, only the payoff $u^t(z^t)$ relative to the terminal state $z^t$ reached in the interaction will be revealed. In other words, the feedback that is revealed to the agent after the interaction is the terminal state $z^t$ that is reached (which directly captures all signals revealed by the environment, as they are the signals encountered on the path from the root of the decision process to $z^t$) and its corresponding payoff $u^t(z^t)$, which can be equivalently expressed as $u^t(z^t) = (\vec{\ell}^t)^{\!\top}\vec{\pi}^t$, where the gradient vector $\vec{\ell}^t$ is defined as the (unique) vector such that for all strategies $\vec{x} \in Q$,
\begin{equation}\label{eq:ib loss}
    (\vec{\ell}^t)^{\!\top}\vec{x} = \sum_{z\in Z} u^t(z) \mleft(\prod_{(k, s) \leadsto z} \bbone[s^t_k = s]\mright)x[\sigma(z)].
\end{equation}
We assume that the environment is adversarial, in the sense that the environment's choices of payoffs $u^t$ and signals $s^t_k$ at time $t$ can depend on the previous actions of the agent.

Our term ``interactive bandits'' comes from the fact that an algorithm for this setting can be thought of as interacting with the environment until a terminal state of the decision process is reached and a corresponding payoff is revealed. In other words, while for modeling purposes it is convenient to think about online learning algorithms as outputting strategies $\vec{\pi}$ for the \emph{whole} strategy space, one can think of an interactive bandits algorithm as one that instead only outputs one action at a time as the interaction moves throughout the decision process.

Since the interactive bandit model requires that the gradient vector $\vec{\ell}^t$ be expressible as in~\eqref{eq:ib loss}, it makes more assumptions on the gradient vector than either the bandit linear optimization model or the full-feedback model. In terms of feedback, it is an intermediate model: it receives a superset of the feedback that the bandit linear optimization framework receives, but significantly less than the full-feedback model. So, theoretically, one could always use an algorithm for the bandit linear optimization model to solve a problem in the interactive bandit model. However, as we show in this paper, one can design a regret-minimization algorithm for the interactive bandit model that achieves sublinear $O(T^{3/4})$ regret with high probability \emph{even in decision processes and extensive-form games whose decision space is at first unknown}. To our knowledge, no algorithm guaranteeing sublinear regret when the decision space is at first unknown has been designed for bandit linear optimization.

    \section{Algorithm for Unknown Tree-Form Sequential Decision Making Problems}\label{sec:algorithm}

We now describe a regret minimizer for the interactive bandit model.
At all time $t$, it goes through two phases: first, the \emph{rollout} phase, and then the \emph{regret update} phase. During the rollout phase, the algorithm plays until a terminal state $z^t$ and its corresponding payoff $u^t$ is revealed. During the regret update phase, the algorithm rewinds through the decision nodes $j$ encountered during that trajectory, and updates certain parameters at those decision nodes based on the newly-observed payoff $u^t$.%
\footnote{While our algorithm shares many of the building blocks on Monte Carlo Tree Search (MCTS)---incremental tree-building, back-propagation, rollouts---it is \emph{not} an anytime search algorithm, at least not in the sense of traditional game-tree search like the one employed by the Online Outcome Sampling algorithm~\citep{Lisy15:Online}.}
Like the CFR framework, our algorithm picks actions at each decision node $j$ by means of a \emph{local} full-feedback regret minimizer $\cR_j$ for the action set $A_j$ at that decision node.

    \subsection{Rollout Phase: Playing the Game}
    We describe two alternative algorithms for the rollout phase (namely the ``upfront-flipping'' and the ``on-path-flipping'' rollout variants), which differ in the way they mix exploration and exploitation. Both variants are theoretically sound, and yield to the same sublinear in-high-probability regret bound (\cref{sec:guarantees}), even when different variants are used at different times $t$ while playing against the adversarial opponent.
    
    We start from the ``upfront-flipping'' variant, which is arguably the conceptually simpler variant, although we find it to usually perform worse in practice.
    
    \vspace{-3mm}
    \paragraph{Upfront-Flipping Rollout Variant}
    When the \emph{upfront-flipping rollout variant} is used at time $t$, at the beginning of the rollout phase and before any action is played, a biased coin is tossed to decide the algorithm used to play out the interaction:
    \begin{itemize}[nolistsep,itemsep=1mm,leftmargin=5mm]
      \item With probability $\beta^t$, the \textsc{Explore} routine is used. It ignores the recommendations of the local regret minimizers at each decision node and instead plays according to the \emph{exploration function} $h^t: \Sigma \to \bbR_{>0}$ (more details are below). In particular, at every decision node $j$ encountered during the rollout, the agent picks action $a \in A_j$ at random according to the distribution $h^t(j,a) / (\sum_{a'\in A_j} h^t(j,a'))$.
      \item With probability $1-\beta^t$, the \textsc{Exploit} routine is used. With this routine, at every decision node $j$ encountered during the rollout, the agent picks a decision by sampling from the distribution $\vec{x}_j^t \in \Delta^{|A_j|}$ recommended by the regret minimizer $\cR_j$.
    \end{itemize}
    In both cases, a regret minimizer $\cR_j$ for decision node $j$ is created when $j$ is first discovered.

    Our upfront-flipping rollout strategy differs from the $\epsilon$-greedy strategy in that the coin is tossed for the entire trajectory, not at each decision point.
     
     \vspace{-3mm}
    \paragraph{On-Path-Flipping Rollout Variant}
    In the \emph{on-path-flipping} rollout variant, there is no single coin toss to distinguish between exploration and exploitation, and the two are interleaved throughout the rollout.
    Before any action is picked, the two \emph{reach} quantities $r^t, \hat{r}^t$ are both set to $1$. Then, the rollout phase begins, and eventually the agent will be required to make a decision (pick an action) at some decision point $j$. Let $\vec{x}_j^t \in \Delta^{|A_j|}$ be the distribution over actions recommended by the regret minimizer $\cR_j$. In the on-path-flipping rollout variant, the agent picks an action $a \in A_j$ at $j$ with probability proportional to
    \[
        (1 - \beta^t) r^t \cdot {x}_j^t[a] + \beta^t \hat{r}^t\cdot \frac{h^t(j,a)}{\sum_{a'\in A_j} h^t(j,a')}.
    \]
    Let $a^*$ be the chosen action; $r^t$ and $\hat{r}^t$ are updated according to the formulas
    $
        r^t \gets r^t\cdot x_j^t[a^*]$ and $\hat{r}^t \gets \hat{r}^t \cdot \frac{h^t(j,a^*)}{\sum_{a'\in A_j} h(j,a')}$.
    The agent keeps using this specific way of selecting actions and updating the reach quantities $r,\hat{r}$ for all decision points encountered during the rollout.

    \vspace{-3mm}
    \paragraph{The role of $h^t$}
    In both variants, the role of the exploration function $h^t$ is to guide exploration of different parts of the decision process.\footnote{Despite the positive exploration term induced by $h^t$, it is not guaranteed that all decision points will be discovered as $T \to \infty$ as the adversarial environment might prevent so. Nonetheless, our algorithm guarantees sublinear regret with high probability.} The optimal choice for $h^t$ is to have $h^t(j,a)$ measure the number of terminal states in the subtree rooted at $(j,a)$. When this information is not available, a heuristic can be used instead. If no sensible heuristic can be devised, the uniform exploration strategy $h^t(j,a) = 1$ for all $(j,a)\in\Sigma$ is always a valid fallback. In \cref{thm:regret bound general} below, we give guarantees about the regret cumulated by our algorithm that apply to any $h^t:\Sigma\to\bbR_{>0}$.

     \subsection{Regret Update Phase: Propagating the Payoff up the Tree}
     In the regret update phase, the revealed feedback (that is, the revealed utility $u^t(z^t)$ and the terminal state $z^t$ that was reached in the rollout phase) is used to construct suitable \emph{local} gradient vectors $\vec{\ell}_j^t$ for each of the local regret minimizers $\cR_j$ on the path from the terminal state $z^t$ up to the root. Let $(j_1, a_1), \dots, (j_m, a_m)$ be the sequence of decision nodes and actions that were played, in order, that ultimately led to terminal state $z^t$ during the repetition of the game at time $t$. We start to construct local gradient vectors from decision node $j_m$, where we set the three quantities
\begin{align*}
    &\gamma^t \defeq (1-\beta^t)\cdot\prod_{i=1}^m x_{j_i}^t[a_i] + \beta^t\cdot\prod_{i=1}^m \frac{h^t(j_i,a_i)}{\sum_{a' \in A_{j_i}} h^t(j_i,a')},\\
    &\hspace{2cm}\hat{u}_{j_m}^t \defeq \frac{u^t(z^t)}{\gamma^t},\quad
    \vec{\ell}_{j_m}^t \defeq \hat{u}^t_{j_m}\vec{e}_{a_{m}},
\end{align*}
where we used the notation $\vec{e}_{a_m} \in \Delta^{A_j}$ to denote the $a_m$-th canonical vector, that is the vector whose components are all $0$ except for the $a_m$-th entry, which is $1$. Then, for all $i = 1, \dots, m-1$, we recursively let
\[
    \hat{u}^t_{j_i} \defeq x^t_{j_i}[a_{i}]\cdot \hat{u}^t_{j_{i+1}},\qquad
    \vec{\ell}_{j_i}^t \defeq \hat{u}^t_{j_i}\vec{e}_{a_{i}}.
\]
Finally, for all $i = 1,\dots, m$, the gradient vector $\vec{\ell}_{j_i}^t$ is revealed as feedback to the local full-feedback regret minimizer $\cR_{j_i}$ at decision point $j_i$.


    \subsection{Average Policy}
    When regret minimizers are used to solve a convex-concave saddle point problem (such as a Nash equilibrium in a two-player zero-sum game), only the profile of \emph{average} policies produced by the regret minimizers are guaranteed to converge to the saddle point. For this reason, it is crucial to be able to represent the \emph{average} policy of an agent. Since we are assuming that the structure of the decision problem is only partially known, this operation requires more care in our setting. As we now show, it is possible to modify the algorithm so that the average policy can be extracted.

    In order to maintain the average policy, we maintain an additional vector $\bar{\vec{x}}_j$ at each discovered decision node $j$. Intuitively, these vector will be populated with entries from the cumulative sum of all partial sequence-form strategies recommended so far by the $\cR_j$'s.
    As soon as $j$ is discovered for the first time (say, time $t$), we create its $\bar{\vec{x}}_j$.
    If $j$'s parent sequence is the empty sequence (that is, $j$ is one of the possible first decision nodes in the TFSDM process, i.e., $j$ is preceded only by observation nodes), we simply set $\bar{x}_j[a]^t \defeq t / |A_j|$ for all $a \in A_j$. Otherwise, let $p_j = (j',a')$ be the parent sequence of $j$, and we set $\bar{x}_j[a]^t \defeq \bar{x}_{j'}[a'] / |A_j|$ for all $a \in A_j$. Then, at all times $t$, after the feedback has been received but before the regret update phase has started, we introduce a new \emph{average policy update phase}. In it, we iterate through all the  decision nodes $j_i$ that have been discovered so far (that is, the union of all decision points discovered up to time $t$), in the order they have been discovered. For each of them, we update  $\bar{\vec{x}}_{j_i}$ according to the following rule. Let $\vec{x}_{j_i}$ be the policy strategy returned by the local full-feedback regret minimizer $\cR_{j_i}$. If $j_i$'s parent sequence is the empty sequence, we simply set $\bar{\vec{x}}_{j_i}^{t+1} \defeq \bar{\vec{x}}_{j_i}^t + \vec{x}_{j_i}$ and $r^t_{j_i} \defeq \vec{x}_{j_i}$.
Otherwise, let $p_{j_i} = (j',a')$ be the parent sequence of $j$, and we set $\bar{\vec{x}}_{j_i}^{t+1} \defeq \bar{\vec{x}}_{j_i}^t + r^{t}_{j'}[a'] \cdot \vec{x}^t_{j_i}$, and $r^t_{j_i}[a] \defeq r^t_{j'}[a'] \cdot x^t_{j_i}[a]$ for all $a \in A_{j_i}$.

In order to play the average policy, it is enough to play actions proportionally to $\bar{\vec{x}}_j^{t+1}$ at all discovered decision nodes $j$, and actions picked uniformly at random at undiscovered decision nodes. 

\begin{observation}\label{obs:xxx2}
    In all phases, the agent performs an amount of operations at most linear in the number of actions $|A_j|$ at each decision point $j$ discovered up to time $t$. So, the average policy update phase requires work at most linear in the size of the underlying TFSDM.
\end{observation}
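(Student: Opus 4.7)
The plan is to verify the claim by inspecting each of the three phases separately and accounting for the work performed per decision node, then summing those contributions over all decision points discovered so far. Since the statement is essentially a bookkeeping claim, the proof proposal is a structured case analysis rather than a deep argument; the only subtlety is making sure that no hidden overhead (e.g.\ renormalization of distributions or storage of dense gradient vectors) introduces superlinear work.

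First I would handle the rollout phase. In the upfront-flipping variant, the coin toss happens once, and at each decision node $j$ visited on the rollout trajectory the agent either (a) queries $\cR_j$ for $\vec{x}_j^t$ and samples one action from $\vec{x}_j^t$, which is $O(|A_j|)$ work, or (b) normalizes and samples from $h^t(j,\cdot)$, again $O(|A_j|)$. In the on-path-flipping variant, the agent still only needs $O(|A_j|)$ work per visited node to form the convex combination of $\vec{x}_j^t$ and the normalized $h^t(j,\cdot)$ and to update the two scalars $r^t,\hat r^t$. In either variant, at most one action is visited per decision node during the rollout, so the total rollout work is bounded by $\sum_j |A_j|$ over the visited nodes.

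Next I would address the regret update phase. The recursion $\hat u_{j_i}^t = x^t_{j_i}[a_i]\cdot \hat u^t_{j_{i+1}}$ is $O(1)$ per node on the sampled path. The local gradient $\vec{\ell}^t_{j_i} = \hat u^t_{j_i}\vec{e}_{a_i}$ is sparse: only the $a_i$-th coordinate is nonzero, so writing it down is $O(1)$ and feeding it into a standard full-feedback regret minimizer $\cR_{j_i}$ (e.g.\ regret matching or Hedge over $A_{j_i}$) costs $O(|A_{j_i}|)$. Hence the per-node work is $O(|A_{j_i}|)$, confirming the per-decision-point linear bound for this phase as well.

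Finally, I would verify the average policy update phase. For each discovered decision node $j_i$, the only operations are: reading $\vec{x}^t_{j_i}$ from $\cR_{j_i}$; multiplying by the scalar $r^t_{j'}[a']$ inherited from the parent sequence; and performing the vector additions $\bar{\vec{x}}_{j_i}^{t+1}\gets \bar{\vec{x}}_{j_i}^t + r^{t}_{j'}[a']\cdot \vec{x}^t_{j_i}$ and $r^t_{j_i}[a]\gets r^t_{j'}[a']\cdot x^t_{j_i}[a]$. Both updates are componentwise on vectors of length $|A_{j_i}|$, and the initialization of $\bar{\vec{x}}_{j_i}$ when $j_i$ is first discovered is likewise $O(|A_{j_i}|)$. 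Summing $O(|A_{j_i}|)$ over all decision nodes $j_i$ discovered up to time $t$ yields $O(\sum_{j\in\cJ} |A_j|)$, which is linear in the size of the underlying TFSDM. There is no real obstacle here; the mildest care needed is to confirm that $r^t_{j'}[a']$ is already stored from a previous iteration of the for-loop (which is why the nodes are traversed in the order they were discovered, i.e.\ parents before children), so that each node's update is genuinely $O(|A_{j_i}|)$ rather than recomputing parent reaches.
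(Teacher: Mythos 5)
Your proposal is correct and matches the paper's (implicit) justification: the observation is stated without a detailed proof precisely because it follows from the per-phase accounting you spell out --- sparse one-hot local gradients in the regret update, $O(|A_j|)$ sampling/mixing per visited node in either rollout variant, and an $O(|A_{j_i}|)$ componentwise update per discovered node in the average-policy phase, with parent reaches available since nodes are processed in discovery order. Nothing is missing.
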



    \section{Guarantees on Regret}\label{sec:guarantees}
        
%
In this section, we provide a sketch of the key logical steps in our proof of the sublinear regret guarantees for the algorithm we just described in \cref{sec:algorithm}.\footnote{More details and full proofs are available in the full version of this paper, which
is available on arXiv.}
In order to make the analysis approachable, we start by presenting a \emph{conceptual} version of our algorithm under the assumption that the structure
of the tree-form sequential decision problem is fully known. As it turns out, our conceptual algorithm can be implemented exactly as in \cref{sec:algorithm}, and therefore, it does not actually need to know the structure of the decision process in advance.

At a high level, the construction of our conceptual interactive bandit regret minimizer works as follows (see also \cref{fig:analysis}).  At each iteration $t$, we use a full-feedback regret minimizer $\cR_Q$ to output a recommendation for the next sequence-form strategy $\vec{y}^t \in Q$ to be played. Then, we introduce a bias on $\vec{y}^t$, which can be thought of as an exploration term. The resulting biased strategy is $\vec{w}^t \in Q$. Finally, we sample a deterministic policy $\vec{\pi}^t \in \Pi$ starting from $\vec{w}^t$. After the playthrough is over and a terminal node $z^t \in Z$ has been reached, we use the feedback (that is, the terminal node $z^t$ reached in the playthrough, together with its utility $u^t(z^t)$), to construct an unbiased estimator $\tilde{\vec{\ell}}^t$ of the underlying gradient vector that was chosen by the environment.

    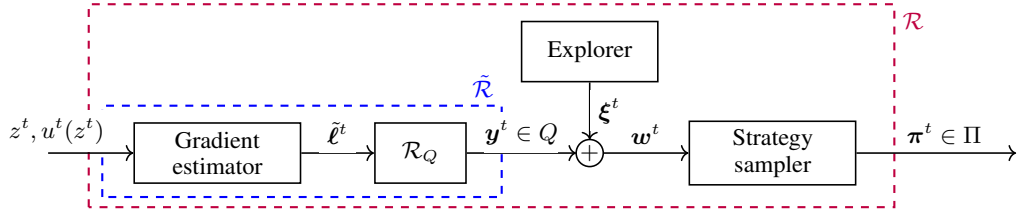
\begin{figure*}[t]\centering
        \begin{tikzpicture}[scale=.9]\small
          \node[draw,semithick,align=center,text width=2.0cm,minimum height=.8cm] (G) at (0,0) {Gradient estimator};
          \node[draw,semithick,align=center,text width=1.0cm,minimum height=.8cm] (Q) at (3,0) {$\cR_Q$};
          \node[draw,semithick,align=center,text width=2.0cm,minimum height=.8cm] (S) at (8.2,0) {Strategy sampler};
          \node[draw,semithick,align=center,text width=1.6cm,minimum height=.8cm] (E) at (5.5,1.5) {Explorer};
          \node[] (plus) at (5.5,0) {$+$};
          \draw[thick,dashed,blue] (-1.7,-.65) rectangle (4.2,.7);
          \node[above left,blue] at (4.2,.7) {$\tilde{\cR}$};
          \draw[thick,dashed,purple] (-1.9,-.8) rectangle (10,2.2);
          \node[below right,purple] at (10,2.2) {$\cR$};

          \draw[semithick] (5.5, 0) circle (.2);
          \draw[semithick,->] (E.south) --node[right,fill=white]{$\vec{\xi}^t$} (5.5,.2);
          \draw[semithick,->] (-2.5,0) --node[pos=.1,above,fill=white]{$z^t, u^t(z^t)$} (G.west);
          \draw[semithick,->] (G.east) --node[above,fill=white]{$\tilde{\vec{\ell}}^t$} (Q.west);
          \draw[semithick,->] (Q.east) --node[above,fill=white,inner xsep=0]{$\vec{y}^t\in Q$} (5.3,0);
          \draw[semithick,->] (5.7,0) --node[above,fill=white]{$\vec{w}^t$} (S.west);
          \draw[semithick,->] (S.east) --node[above,fill=white,pos=.55,inner xsep=0]{$\vec{\pi}^t\in\Pi$} (11.8,0);
        \end{tikzpicture}
        \caption{Conceptual construction of our algorithm for the interactive bandit online learning setting.}\label{fig:analysis}
    \end{figure*}

    Our full-feedback regret minimizer $\cR_Q$ is the counterfactual regret minimization (CFR) algorithm~\citep{Zinkevich07:Regret}. At all times $t$, CFR
    combines the strategies output by the local regret minimizers $\cR_j$ ($j\in \cJ$), one per decision node, into the strategy $\vec{y}^t\in Q$ for the overall TFSDM problem. CFR guarantees $O(T^{1/2})$ regret in the worst case against any strategy $\hat{\vec{y}} \in Q$. 
    
The gradient estimator that we use in our algorithm is a form of importance-sampling estimator that can be constructed starting from the feedback received by the regret minimizer (that is, the terminal leaf $z^t$ and its corresponding payoff $u^t(z^t)$). It is a particular instantiation of the \emph{outcome sampling} estimator that appeared in the works by~\citet{Lanctot09:Monte} and~\citet{Farina20:Stochastic}, and is defined as
    \begin{equation}\label{eq:estimator}
        \tilde{\vec{\ell}}^t \defeq \frac{u^t(z^t)}{w^t[\sigma(z^t)]}\vec{e}_{\sigma(z^t)},
    \end{equation}
    where $\vec{e}_{\sigma(z^t)} \in \bbR^{|\Sigma|}$ is the vector that has zeros
    everywhere but in the component corresponding to the terminal sequence $\sigma(z^t)$, where
    it has a one.

The role of the exploration term is to reduce the norm of the gradient estimators, by
making sure that the denominator $w^t[\sigma(z^t)]$ in~\eqref{eq:estimator} is never too small. We bias the strategies $\vec{y}^t$ output by $\cR_Q$ by taking a convex combination
\[
    \vec{w}^t \defeq (1-\beta^t)\cdot \vec{y}^t + \beta^t \cdot \vec{\xi}^t
\]
with the \emph{exploration strategy} $\vec{\xi}^t \in Q$ defined as the sequence-form strategy that picks action $a \in A_j$ at decision node $j\in\cJ$ with probability proportional to
\[
    \xi^t[j,a] \propto \frac{h^t(j,a)}{\sum_{a\in A_j} h^t(j,a)}.
\]

The role of the strategy sampler is to provide an \emph{unbiased} estimator $\vec{\pi}^t \in \Pi$ of $\vec{w}^t$. In fact, any unbiased estimator can be used in our framework, and in \cref{sec:algorithm} we gave two: upfront-flipping and on-path-flipping strategy sampling.

    Our analysis is split into two main conceptual steps. First, we quantify the regret degradation that is incurred in passing from $\tilde{\cR}$ to $\cR$ (\cref{fig:analysis}). Specifically, we study how the regret cumulated by $\tilde{\cR}$,
    $
        \tilde{R}^T(\vec{\pi}) \defeq \sum_{t=1}^T (\vec{\ell}^t)^\top(\vec{\pi} - \vec{y}^t),
    $
    is related to the regret cumulated by our overall algorithm,
    $
        R^T(\vec{\pi}) \defeq \sum_{t=1}^T (\vec{\ell}^t)^\top (\vec{\pi} - \vec{\pi}^t).
    $
    In this case, the degradation term comes from the fact that the strategy output by $\cR$, that is, $\vec{\pi}^t \in \Pi$, is not the same as the one, $\vec{y}^t$, that was recommended by $\tilde{\cR}$, because an exploration term was added. Using a concentration argument, the following inequality can be shown.
    \begin{proposition}\label{prop:relationship R Rtilde}
        At all $T$, for all $p\in(0,1)$ and $\vec{\pi} \in \Pi$, with probability at least $1-p$,
        \[
            R^T(\vec{\pi}) \le \tilde{R}^T(\vec{\pi}) +  \Delta\mleft( \sqrt{2T\log\frac{1}{p}}+ \sum_{t=1}^T \beta^t\mright).
        \]
    \end{proposition}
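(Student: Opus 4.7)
The plan is to decompose the difference $R^T(\vec{\pi}) - \tilde{R}^T(\vec{\pi})$ into two pieces, one deterministic (the bias introduced by mixing in the exploration strategy $\vec{\xi}^t$) and one stochastic (the sampling noise from drawing $\vec{\pi}^t$ from $\vec{w}^t$), and to handle them separately. Explicitly, since $\vec{\pi}$ cancels, we have
\[
    R^T(\vec{\pi}) - \tilde{R}^T(\vec{\pi}) = \sum_{t=1}^T (\vec{\ell}^t)^{\!\top}(\vec{y}^t - \vec{\pi}^t) = \underbrace{\sum_{t=1}^T (\vec{\ell}^t)^{\!\top}(\vec{y}^t - \vec{w}^t)}_{\text{(A) bias}} + \underbrace{\sum_{t=1}^T (\vec{\ell}^t)^{\!\top}(\vec{w}^t - \vec{\pi}^t)}_{\text{(B) noise}}.
\]

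First I would bound (A). By definition of $\vec{w}^t$, $\vec{y}^t - \vec{w}^t = \beta^t(\vec{y}^t - \vec{\xi}^t)$, so each summand equals $\beta^t (\vec{\ell}^t)^{\!\top}(\vec{y}^t - \vec{\xi}^t)$. Since both $\vec{y}^t$ and $\vec{\xi}^t$ lie in $Q$ and the payoff range is $\Delta$ (so that $|(\vec{\ell}^t)^{\!\top}(\vec{x}-\vec{x}')| \le \Delta$ for any $\vec{x},\vec{x}' \in Q$), each term is bounded in absolute value by $\Delta \beta^t$, contributing $\Delta \sum_{t=1}^T \beta^t$ to the bound.

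Next I would bound (B) via a martingale concentration argument. Let $\cF_{t-1}$ denote the filtration generated by all randomness through the end of iteration $t-1$. The quantities $\vec{\ell}^t$, $\vec{y}^t$, and $\vec{w}^t$ are all $\cF_{t-1}$-measurable (the environment fixes $\vec{\ell}^t$ before the agent acts, and $\vec{w}^t$ depends only on the strategies of the local regret minimizers, which were updated using only past feedback). Because the strategy sampler produces an unbiased estimate $\mathbb{E}[\vec{\pi}^t \mid \cF_{t-1}] = \vec{w}^t$, the sequence $X_t \defeq (\vec{\ell}^t)^{\!\top}(\vec{w}^t - \vec{\pi}^t)$ is a martingale difference sequence with $|X_t| \le \Delta$ (again by the payoff-range bound, since $\vec{w}^t, \vec{\pi}^t \in Q$). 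Azuma-Hoeffding then yields
\[
    \Pr\!\left[\sum_{t=1}^T X_t \ge \Delta\sqrt{2T\log(1/p)}\right] \le p.
\]

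Combining the deterministic bound on (A) with the high-probability bound on (B) via the inequality $R^T(\vec{\pi}) - \tilde{R}^T(\vec{\pi}) = (\mathrm{A}) + (\mathrm{B})$ immediately yields the claim. The only subtlety in the argument, and what I expect to be the main conceptual obstacle, is verifying the martingale structure: one must carefully check that the adversary's choice of $\vec{\ell}^t$ and the algorithm's choice of $\vec{w}^t$ are both determined before the sampling randomness that produces $\vec{\pi}^t$ is realized, so that the conditional unbiasedness $\mathbb{E}[\vec{\pi}^t \mid \cF_{t-1}] = \vec{w}^t$ is legitimately applicable. Once this is in place, the rest is a direct application of the Azuma-Hoeffding inequality with the uniform bound $|X_t| \le \Delta$.
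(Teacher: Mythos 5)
Your proposal is correct and follows essentially the same route as the paper: the paper likewise bounds the bias term $\sum_{t}(\vec{\ell}^t)^{\!\top}(\vec{y}^t-\vec{w}^t)\le\Delta\sum_t\beta^t$ (its Lemma on the gap sum, using $\vec{y}^t-\vec{w}^t=\beta^t(\vec{y}^t-\vec{\xi}^t)$ and the payoff range) and applies Azuma--Hoeffding to the martingale difference sequence $(\vec{\ell}^t)^{\!\top}(\vec{\pi}^t-\vec{w}^t)$ with range $\Delta$, exactly your term (B) up to sign. The only difference is presentational: the paper folds your (A)+(B) decomposition into a single algebraic chain before invoking the concentration inequality, while you separate the deterministic and stochastic parts explicitly.
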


The second step in the analysis is the quantification of the regret degradation that is incurred in passing from $\cR_Q$ to $\tilde{\cR}$ (\cref{fig:analysis}). Specifically, we will study how the regret cumulated by $\tilde{\cR}$ is related to the regret
    $
        \cR_Q^T(\vec{\pi}) \defeq \sum_{t=1}^T (\tilde{\vec{\ell}}^t)^\top(\vec{\pi} - \vec{y}^t),
    $
    which is known to be $O(T^{1/2})$ with high probability from the analysis of the  CFR algorithm. In particular, the following can be shown with a second concentration argument.
    \begin{proposition}\label{prop:relationship Rtilde RQ}
        At all $T$, for all $p\in(0,1)$ and $\vec{\pi} \in \Pi$, with probability at least $1-p$,
        \[
            \tilde{R}^T(\vec{\pi}) \le R_Q^T(\vec{\pi}) + \frac{\Delta}{\beta^T}\nu \,\sqrt{2T\log\frac{1}{p}},
        \]
        where
        \[
            \nu \defeq \sqrt{\frac{1}{T}\sum_{t=1}^T \max_{z\in Z} \mleft\{ \prod_{(j,a) \leadsto z} \mleft(\frac{\sum_{a'\in A_{j}} h^t(j, a')}{h^t(j,a)}\mright)^{\!\!2}\mright\}}.
        \]
    \end{proposition}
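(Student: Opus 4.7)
The plan is to recognize the quantity $\tilde{R}^T(\vec\pi) - R_Q^T(\vec\pi) = \sum_{t=1}^T (\vec\ell^t - \tilde{\vec\ell}^t)^\top(\vec\pi - \vec y^t)$ as the sum of a martingale difference sequence, and then to control it via Azuma--Hoeffding. Let $\mathcal{F}_{t-1}$ denote the filtration that contains everything that has happened up to (and including) the moment $\vec y^t$ and $\vec w^t$ are produced, but just before the rollout at time $t$ is sampled. Since $\vec\pi$ is a fixed comparator, and both $\vec y^t$ and $\vec w^t$ are $\mathcal{F}_{t-1}$-measurable, I would first check that $\mathbb{E}[\tilde{\vec\ell}^t \mid \mathcal{F}_{t-1}] = \vec\ell^t$. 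This is the standard unbiasedness of outcome sampling: summing over the terminal states $z^t$ with probability $w^t[\sigma(z^t)]$, the estimator in~\eqref{eq:estimator} telescopes into $\vec\ell^t$ componentwise when paired with any $\vec x \in Q$. Consequently, $X_t \defeq (\vec\ell^t - \tilde{\vec\ell}^t)^\top(\vec\pi - \vec y^t)$ is a martingale difference sequence with respect to $\{\mathcal{F}_{t-1}\}$.

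The next step is to bound $|X_t|$ almost surely. For every $\vec x \in Q$ one has $0 \le (\vec\ell^t)^\top \vec x \le \Delta$, while $(\tilde{\vec\ell}^t)^\top \vec x = u^t(z^t)\, x[\sigma(z^t)]/w^t[\sigma(z^t)]$ is bounded in magnitude by $\Delta/w^t[\sigma(z^t)]$. The essential ingredient is the lower bound on the denominator coming from the exploration mix: since $\vec w^t = (1-\beta^t)\vec y^t + \beta^t \vec\xi^t$ and $\vec y^t, \vec\xi^t \in Q$ have nonnegative entries, one has $w^t[\sigma(z^t)] \ge \beta^t\, \xi^t[\sigma(z^t)]$, and expanding the sequence-form reach probability of the exploration strategy gives $\xi^t[\sigma(z^t)] = \prod_{(j,a)\leadsto z^t} h^t(j,a)/\sum_{a'}h^t(j,a')$. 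Consequently $|(\tilde{\vec\ell}^t)^\top \vec x| \le (\Delta/\beta^t)\,\nu_t$, where $\nu_t^2 \defeq \max_{z\in Z}\prod_{(j,a)\leadsto z}\bigl(\sum_{a'}h^t(j,a')/h^t(j,a)\bigr)^{\!2}$. A triangle inequality then yields $|X_t| \le c_t$ with $c_t$ of the order of $\Delta\nu_t/\beta^t$.

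Once the martingale increments are bounded in this way, Azuma--Hoeffding delivers, with probability at least $1-p$,
\[
    \sum_{t=1}^T X_t \;\le\; \sqrt{2\log(1/p)\sum_{t=1}^T c_t^2}.
\]
Assuming the standard monotone-nonincreasing schedule for $\beta^t$ (so that $\beta^t \ge \beta^T$ for all $t\le T$), one has $\sum_{t=1}^T c_t^2 \lesssim (\Delta/\beta^T)^2 \sum_{t=1}^T \nu_t^2 = (\Delta/\beta^T)^2 T \nu^2$ by the definition of $\nu$. Plugging this into Azuma gives exactly the bound in the statement (up to the constant absorbed by the sharper argument for sequence-form estimators described below). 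Rearranging produces the desired inequality $\tilde R^T(\vec\pi) \le R_Q^T(\vec\pi) + (\Delta/\beta^T)\,\nu\,\sqrt{2T\log(1/p)}$.

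The main obstacle is obtaining the constant shown in the statement rather than a loose factor of two. The natural triangle inequality yields a bound of the form $|X_t|\le 2\Delta\nu_t/\beta^t$ and in turn $2\Delta\nu\sqrt{2T\log(1/p)}/\beta^T$. To remove the factor of two I would not split $\vec\pi$ and $\vec y^t$ separately: instead, I would exploit that both $\vec\pi,\vec y^t \in Q$ and that the single coordinate $\sigma(z^t)$ is hit, so that $(\tilde{\vec\ell}^t)^\top(\vec\pi - \vec y^t) = u^t(z^t)(\pi[\sigma(z^t)] - y^t[\sigma(z^t)])/w^t[\sigma(z^t)]$, whose range is controlled by the single maximum $\Delta\nu_t/\beta^t$ rather than by the sum of two separate extrema. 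Combined with the corresponding range for $\vec\ell^t$, applying the one-sided Hoeffding-type bound for a bounded-range martingale then produces the stated constant. Everything else is bookkeeping, together with the direct substitution of $\nu^2 = T^{-1}\sum_t \nu_t^2$.
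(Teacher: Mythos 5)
Your proposal is correct and follows essentially the same route as the paper's proof: the same martingale difference sequence $(\vec{\ell}^t-\tilde{\vec{\ell}}^t)^{\!\top}(\vec{\pi}-\vec{y}^t)$, unbiasedness of the outcome-sampling estimator, the lower bound $w^t[\sigma(z^t)]\ge\beta^t\,\xi^t[\sigma(z^t)]$ combined with the monotonicity of $\beta^t$ to pull out $1/\beta^T$, and an Azuma--Hoeffding bound followed by rearrangement. Your remark about avoiding the loose factor of two also matches what the paper does implicitly: it treats $2\Delta\rho^t/\beta^t$ as the conditional \emph{range} (interval width) of the increment and applies the one-sided bounded-range form of Azuma--Hoeffding, which is exactly how the stated constant $\frac{\Delta}{\beta^T}\nu\sqrt{2T\log\frac{1}{p}}$ is obtained.
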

When $h^t$ measures the number of terminal states reachable under any sequence $(j,a)$, the constant $\nu$ satisfies $\nu \le |\Sigma| - 1$. For the uniform exploration function ($h^t(j,a) = 1$ for all $j,a$), $\nu$ is upper bounded by the product of the number of actions at all decision nodes, a polynomial quantity in ``bushy'' decision processes.

Combining \cref{prop:relationship R Rtilde} and \cref{prop:relationship Rtilde RQ} using the union bound lemma, we obtain a general regret bound for $\mathcal{R}$ that holds for any choice of local regret minimizers $\cR_j$ ($j \in \cJ$), non-increasing stepsizes $\beta^t$, and explorations functions $h^t$ at all times $t$, which we formalize in \cref{thm:regret bound general}.  It will be the basis for \cref{thm:regret specific}, which provides a way to set the algorithm parameters to achieve sublinear $O(T^{3/4})$ regret.

        \begin{theorem}\label{thm:regret bound general}
            Let $R_j^T(\hat{\vec{\pi}}_j)$ denote the regret cumulated by the local full-feedback regret minimizer $\cR_j$ compared to a generic strategy $\hat{\vec{\pi}}_j$, and $\Delta$ be the maximum range of payoffs that can be selected by the environment at all times. Then, for all $T \ge 1$, $p \in (0,1)$, and $\hat{\vec{\pi}}\in Q$, with probability at least $1-p$ the regret cumulated by the algorithm of \cref{sec:algorithm} satisfies
                  \begin{align*}\label{eq:bound general}
                        &R^T\!(\pihat) \le \max_{\vec{q} \in
                        \Pi}\mleft\{\sum_{j \in \cJ} q[p_j]\cdot \max_{\hat{\vec{\pi}}_j\in \Delta^{|A_j|}} R_j^T(\pihat_j)\mright\} \\[-1mm]
                        &\hspace{2.5cm}+ \frac{\Delta}{\beta^T}\mleft(1 + \nu\mright)\sqrt{2T \log\frac{2}{p}} + \Delta\sum_{t=1}^T \beta^t
                    ,
                  \end{align*}
                  where $\nu$ is as in \cref{prop:relationship Rtilde RQ}.
        \end{theorem}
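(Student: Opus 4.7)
The plan is to chain \cref{prop:relationship R Rtilde} and \cref{prop:relationship Rtilde RQ} via a union bound, and then invoke the standard CFR regret decomposition to bound $R_Q^T(\pihat)$. First, I would apply \cref{prop:relationship R Rtilde} with confidence parameter $p/2$ in place of $p$, obtaining that with probability at least $1-p/2$,
\[
    R^T(\pihat) \le \tilde{R}^T(\pihat) + \Delta\sqrt{2T\log(2/p)} + \Delta\sum_{t=1}^T \beta^t.
\]
Analogously, I would apply \cref{prop:relationship Rtilde RQ} with parameter $p/2$, yielding that with probability at least $1-p/2$,
\[
    \tilde{R}^T(\pihat) \le R_Q^T(\pihat) + \frac{\Delta\,\nu}{\beta^T}\sqrt{2T\log(2/p)}.
\]
A union bound ensures that both inequalities hold simultaneously with probability at least $1-p$, so I can chain them.

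Next, I would invoke the classical CFR regret decomposition of \citet{Zinkevich07:Regret}. Since $\cR_Q$ is instantiated as CFR built on top of the local full-feedback regret minimizers $\{\cR_j\}_{j\in\cJ}$ that receive exactly the local losses $\vec{\ell}_j^t$ constructed during the regret update phase, the standard analysis (which treats the losses fed to each $\cR_j$ as an arbitrary sequence and is therefore unaffected by the fact that ours are importance-sampled estimates) gives the deterministic bound
\[
    R_Q^T(\pihat) \le \max_{\vec{q}\in\Pi}\left\{ \sum_{j\in\cJ} q[p_j]\cdot \max_{\pihat_j\in\Delta^{|A_j|}} R_j^T(\pihat_j)\right\}.
\]
Substituting this into the chained inequality and combining the two square-root terms using $\beta^T \in (0,1]$, so that $\Delta\sqrt{2T\log(2/p)} \le (\Delta/\beta^T)\sqrt{2T\log(2/p)}$, produces exactly the claimed coefficient $\frac{\Delta(1+\nu)}{\beta^T}\sqrt{2T\log(2/p)}$, while the $\Delta\sum_{t=1}^T \beta^t$ term carries through unchanged.

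The real technical content of the argument sits in the two propositions themselves, both of which are concentration statements about martingale difference sequences arising from (i) the sampling of $\vec{\pi}^t$ from $\vec{w}^t$ in the strategy sampler and (ii) the importance-sampling estimator $\tilde{\vec{\ell}}^t$ whose norm is controlled precisely by the exploration bias $\beta^t\vec{\xi}^t$. Given those propositions as black boxes, the proof of \cref{thm:regret bound general} reduces to the routine bookkeeping above. The only subtlety to watch for is that the CFR decomposition must be applied to $\cR_Q$'s loss stream $\tilde{\vec{\ell}}^t$ rather than to the true $\vec{\ell}^t$; but because CFR's decomposition is a pathwise statement about whatever losses the $\cR_j$'s actually observe, it applies directly, and this is precisely what the regret quantities $R_j^T(\pihat_j)$ in the statement measure.
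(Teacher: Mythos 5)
Your proposal is correct and follows essentially the same route as the paper: apply \cref{prop:relationship R Rtilde} and \cref{prop:relationship Rtilde RQ} each at confidence $p/2$, combine via a union bound (absorbing the extra $\Delta\sqrt{2T\log(2/p)}$ term using $\beta^T \le 1$ to get the $(1+\nu)$ coefficient), and then bound $R_Q^T(\pihat)$ by the deterministic CFR/laminar regret decomposition applied pathwise to the sampled loss stream $\tilde{\vec{\ell}}^t$ observed by the local regret minimizers. The paper does exactly this, citing the decomposition as a known lemma, so there is nothing substantive to add.
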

       \noindent In \cref{thm:regret specific} we operationalize \cref{thm:regret bound general} by showing sensible choices of stepsizes $\beta^t$ and local regret minimizers $\cR_j$.

        \begin{restatable}{theorem}{thmregretspecific}\label{thm:regret specific}
            Let the local full-feedback regret minimizers $\mathcal{R}_j$ guarantee $O(T^{1/2})$ regret in the worst case\footnote{Valid choices include the following algorithms: regret matching~\citep{Hart00:Simple}, regret matching$^+$~\citep{Tammelin15:Solving}, follow-the-regularized-leader, online mirror descent, exponential weights, hedge, and others.}, and let $p\in (0,1)$. Furthermore, let the exploration probabilities be
$\beta^t \defeq \min\{1, k\cdot t^{-1/4}\}$ for all $t$, where $k > 0$ is an arbitrary constant.
Then, there exists a (decision-problem-dependent) constant $c$ independent of $p$ and $T$ such that for all $T \ge 1$
            \[
                \bbP\mleft[\max_{\hat{\vec{\pi}} \in \Pi} R^T(\hat{\vec{\pi}}) \le c \cdot T^{3/4}\Delta\sqrt{\log\frac{2}{p}}\mright] \ge 1 - p.
            \]
            When $h^t$ is an exact measure of the number of terminal states, $c$ is polynomial in $|\Sigma|$. Otherwise, it is linear in the constant $\nu$ defined in~\cref{prop:relationship Rtilde RQ}, which depends on the specific exploration functions used.
        \end{restatable}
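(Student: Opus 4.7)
The plan is to instantiate Theorem~\ref{thm:regret bound general} with the proposed parameters and to bound each of the three summands on the right-hand side so that they all grow at rate at most $T^{3/4}\sqrt{\log(2/p)}$, with constants depending only on quantities of the underlying decision problem. Because Theorem~\ref{thm:regret bound general} is stated for an arbitrary $\hat{\vec{\pi}}\in Q\supseteq \Pi$, the uniformity over $\hat{\vec{\pi}}\in\Pi$ demanded by Theorem~\ref{thm:regret specific} will follow with no union bound in $|\Pi|$.

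First I would bound the local-regret term $\max_{\vec{q}\in\Pi}\sum_{j\in\cJ}q[p_j]\max_{\hat{\vec{\pi}}_j}R_j^T(\hat{\vec{\pi}}_j)$. Since any $\vec{q}\in\Pi$ is a pure sequence-form strategy, $q[p_j]\in\{0,1\}$; trivially relaxing the indicators to $1$ yields the crude bound $\sum_{j\in\cJ}\max_{\hat{\vec{\pi}}_j}R_j^T(\hat{\vec{\pi}}_j)$, which by the assumed $O(\sqrt{T})$ worst-case guarantee of each local $\cR_j$ is of order $O(|\cJ|\sqrt{T})$. This is strictly dominated by $T^{3/4}$ and gets absorbed into the final constant.

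Next, for the exploration term $\frac{\Delta}{\beta^T}(1+\nu)\sqrt{2T\log\frac{2}{p}}$, I would use $1/\beta^T\le T^{1/4}/k + 1$, which gives a leading $\frac{\sqrt{2}(1+\nu)\Delta}{k}\,T^{3/4}\sqrt{\log\frac{2}{p}}$ plus a lower-order $O(\sqrt{T\log(2/p)})$ contribution. For the bias/degradation term $\Delta\sum_{t=1}^T\beta^t$, I would split the sum at $t=k^4$: the prefix where $\beta^t=1$ contributes at most $\Delta k^4$, and the tail is bounded, via the standard integral comparison $\sum_{t=1}^T t^{-1/4}\le\frac{4}{3}T^{3/4}$, by $\frac{4}{3}\Delta k\,T^{3/4}$.

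Collecting the three estimates yields $R^T(\hat{\vec{\pi}})\le c\,T^{3/4}\Delta\sqrt{\log\frac{2}{p}}$ for a constant $c$ polynomial in $|\cJ|$, $\nu$, $k$, and the hidden constant from the $O(\sqrt{T})$ local regret bound, exactly as claimed; the additional observation that $c$ is polynomial in $|\Sigma|$ under a terminal-counting $h^t$ is then immediate from the bound $\nu\le|\Sigma|-1$ noted after Proposition~\ref{prop:relationship Rtilde RQ}. The combinatorial assembly above is essentially bookkeeping; the real conceptual content sits upstream, in the balancing of the two competing terms $\Delta/\beta^T$ and $\Delta\sum_t\beta^t$ coming from Propositions~\ref{prop:relationship R Rtilde} and~\ref{prop:relationship Rtilde RQ}. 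Those two rates equalize precisely at $\beta^t\propto t^{-1/4}$, which is what forces the $T^{3/4}$ exponent and what makes it unclear whether the originally hypothesized $T^{1/2}$ rate can be achieved in the interactive bandit model.
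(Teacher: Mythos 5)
Your handling of the exploration term ($\Delta\sum_t\beta^t \le \tfrac{4}{3}\Delta k\,T^{3/4}$ via integral comparison) and of the concentration term ($\Delta(1+\nu)\beta_T^{-1}\sqrt{2T\log(2/p)} = O(T^{3/4}\sqrt{\log(2/p)})$) matches the paper, as does the use of $\nu\le|\Sigma|-1$ for the terminal-counting exploration function. However, there is a genuine gap in your treatment of the first term of \cref{thm:regret bound general}. You bound $\sum_{j\in\cJ}\max_{\hat{\vec{\pi}}_j}R_j^T(\hat{\vec{\pi}}_j)$ by $O(|\cJ|\sqrt{T})$, invoking the assumed worst-case $O(T^{1/2})$ guarantee of each local $\cR_j$, and declare it ``strictly dominated'' by $T^{3/4}$. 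That guarantee, however, is stated relative to gradients of bounded (constant) norm, whereas the counterfactual gradients actually fed to the $\cR_j$'s are built from the importance-sampling estimator and have norm as large as $\Delta\rho^t/\beta^t$, which under $\beta^t=\min\{1,k t^{-1/4}\}$ grows like $\Delta\tilde{M}\tilde{k}\,t^{1/4}$ (with $\tilde{M}=\max_t\rho^t$, $\tilde{k}=\max\{1/k,1\}$). The paper explicitly accounts for this degradation: each local regret becomes $c_j\,\Delta\tilde{M}\tilde{k}\,T^{1/4}\cdot T^{1/2}=\Delta\tilde{M}\tilde{k}c_j\,T^{3/4}$, so the local-regret term is itself of order $T^{3/4}$ (and contributes $\tilde{k}\tilde{M}\sum_j c_j$ to the constant $c$), not a lower-order term. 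Your claim as written is unjustified, and the omission also loses the dependence of $c$ on $\tilde{M}$, which is part of why $c$ is only guaranteed polynomial in $|\Sigma|$ under the balanced exploration function.

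Relatedly, your closing intuition that the $T^{3/4}$ rate comes solely from balancing $\Delta/\beta^T$ against $\Delta\sum_t\beta^t$ is incomplete: with $\beta^t\propto t^{-\alpha}$ the local-regret term scales as $T^{1/2+\alpha}$ for exactly the same gradient-norm reason, so it sits at the same order as the concentration term and participates in the balance that fixes $\alpha=1/4$. Once the local term is corrected along the lines above, the rest of your assembly (including absorbing the non-$\sqrt{\log(2/p)}$ pieces into $c$ using $\log(2/p)\ge\log 2$) goes through as in the paper.
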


Since \cref{thm:regret specific} guarantees sublinear regret with high probability, our algorithm can be used for all purposes described in \cref{sec:online learning}, including computing an approximate Nash equilibrium in a two-player zero-sum extensive-form game whose structure is \emph{a priori} unknown.

    \section{Empirical Evaluation}

In our experiments, we used our algorithm to compute an approximate Nash equilibrium. We compared our method to established model-free algorithms from the multiagent reinforcement learning and computational game theory literature for this setting: \emph{neural fictitious self-play (NFSP)}~\cite{Heinrich16:Deep}, the \emph{policy gradient (PG)} approach of \citet{Srinivasan18:Actor}, and the \emph{online variant of Monte-Carlo CFR (online MCCFR)} mentioned in~\citep{Lanctot09:Monte}. In line with prior empirical evaluations of those methods, we compare the algorithms on two standard benchmark games: Kuhn poker~\citep{Kuhn50:Simplified} and Leduc poker~\citep{Southey05:Bayes}. The games are reviewed in \cref{app:games} in the full version of this paper.

We used the implementations of PG and NFSP provided in \emph{OpenSpiel}~\citep{Lanctot19:OpenSpiel}.
They internally use Tensorflow. For PG, we tested the RPG and QPG policy gradient formulations, but not the RM formulation (it performed worst in the original paper~\cite{Srinivasan18:Actor}).  We implemented online MCCFR and our algorithm in C++ (online MCCFR is not implemented in OpenSpiel). We ran every algorithm with five random seeds. In the figures below, we plot the average exploitability (a standard measure of closeness to equilibrium) of the players averaged across the five seeds. The shaded areas indicate the \textit{maximum} and \textit{minimum} over the five random seeds.
%
For NFSP we used the hyperparameter recommended by the OpenSpiel implementation.
For Kuhn poker, we used the settings for PG that were tuned and found to work the best by~\citet{Srinivasan18:Actor}---they are publicly available through OpenSpiel. For PG in Leduc poker, we performed a hyperparameter sweep and selected for the two PG plot (RPG and QPG formulation) the best combination hyperparameters (full details are in the appendix).
For both online MCCFR and our algorithm, we used RM+~\citep{Tammelin14:Solving} as the local (full-feedback) regret minimizer. For our algorithm, we only show performance for the on-path-flipping variant. The upfront-flipping variant performed significantly worse and data is available in the appendix of the full version of this paper. We tested $k \in \{0.5,1,10,20\}$ and set $h^t$ to either the uniform exploration function ($h^t$ constant) or the theoretically-optimal exploration function $h^t$ that measures the number of terminal nodes as explained in \cref{sec:algorithm}. The performance of the two exploration functions was nearly identical, so in \cref{fig:experiments} we show our algorithm with the uniform exploration function. We chose $k=10$ since that performed well on both games.  The plots for all other hyperparameter combinations for our algorithm are in the appendix. For online MCCFR, the only hyperparameter is $\epsilon$, which controls the $\epsilon$-greediness of the exploration term added before sampling and outputting the strategy at each time $t$. We tested $\epsilon = 0.6$ (which was found useful for the different \emph{self-play} MCCFR algorithm~\citep{Lanctot09:Monte}), $0.1$, and $0.0$ (which corresponds to pure exploitation); \cref{fig:experiments} shows the setting that performed best.

\begin{figure}[t]
  \centering
  \includegraphics[scale=.67]{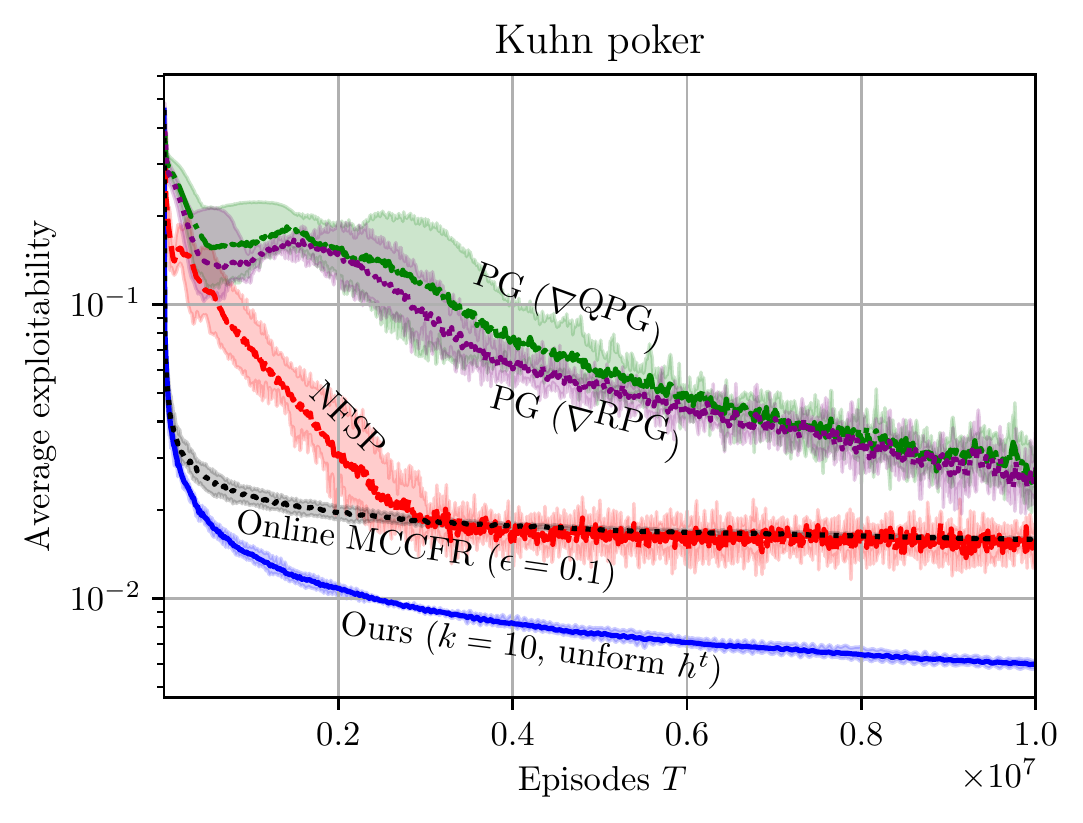}\\[-1mm]
  \includegraphics[scale=.67]{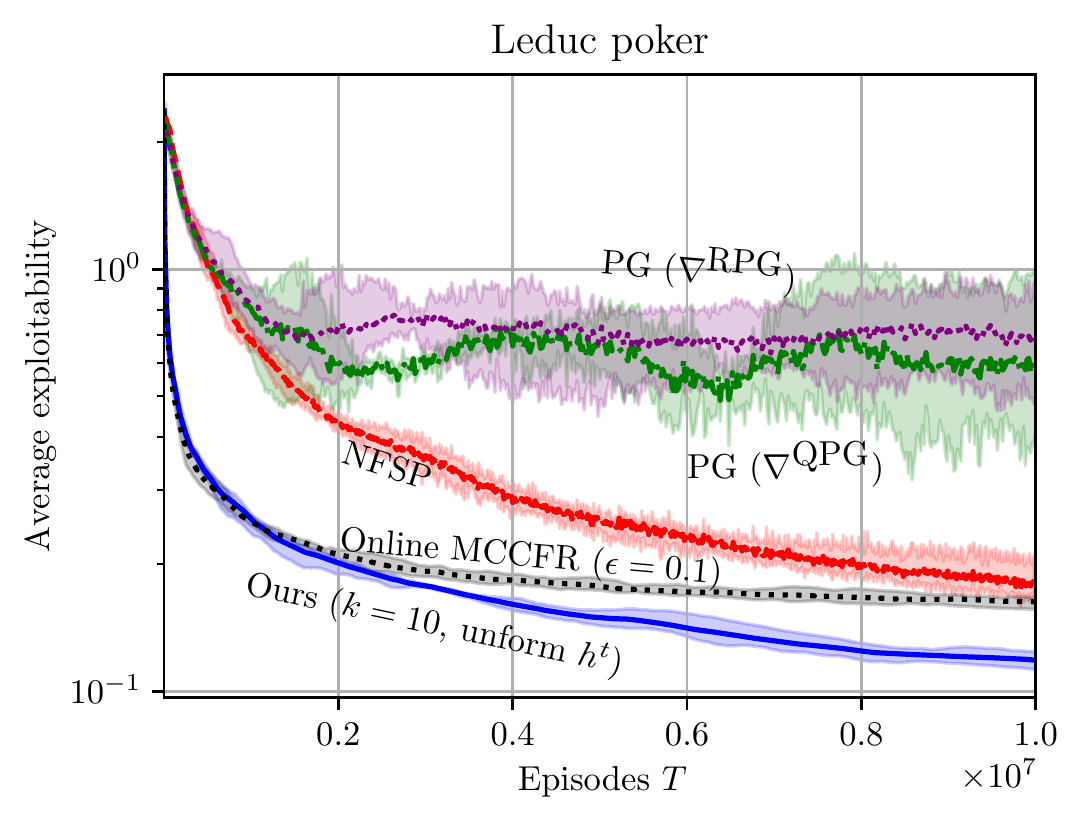}
  \vspace{-3mm}
  \caption{Comparison of the algorithms.}\label{fig:experiments}
\end{figure}

Out of all the algorithms, ours is the only one that guarantees sublinear regret with high probability. This superior guarantee appears to translate into superior practical performance as well. In both benchmark games, our algorithm has lowest exploitability, often by a factor 2x-4x.



\section{Conclusions and Future Research}

We introduced a new online learning model, which we coined the \emph{interactive bandit model}, that captures tree-form sequential decision making.
We developed an algorithm that guarantees sublinear $O(T^{3/4})$ regret with high probability in this model, even when the structure of the underlying decision problem is at first unknown to the agent and must be explored as part of the learning process. This is, to our knowledge, the first in-high-probability regret minimizer for this setting.
It can be used for multiagent reinforcement learning. Its regret guarantee enables it to be used in any application for which regret minimization is useful: approximating Nash equilibrium or quantal response equilibrium~\cite{Ling18:What,Farina19:Online} in two-player zero-sum games, approximating coarse correlated equilibrium in multi-player games~\cite{Moulin78:Strategically,Celli19:Learning}, learning a best response, safe opponent exploitation~\cite{Farina19:Online}, online play against an unknown opponent/environment, etc.
It is open whether better than $O(T^{3/4})$ regret is achievable in this important setting.

    \section*{Acknowledgments}
    This material is based on work supported by the National Science Foundation
    under grants IIS-1718457, IIS-1901403, and CCF-1733556, and the ARO under
    award W911NF2010081. Gabriele Farina is supported by a Facebook fellowship.

    We are grateful to Marc Lanctot and Marcello Restelli for their valuable feedback while preparing our
    manuscript, and to Marc Lanctot and 
    Vinicius Zambaldi for their help in tuning the hyperparameters and running experiments
    for the policy gradient algorithm of \citet{Srinivasan18:Actor}.

    \bibliography{dairefs}

\iftrue
    \clearpage
    \onecolumn
    \leftlinenumbers
    \appendix

    \addtolength{\hoffset}{1.5cm}
    \addtolength{\textwidth}{-3cm}
    \addtolength{\hsize}{-3cm}
    \addtolength{\linewidth}{-3cm}
    \addtolength{\columnwidth}{-3cm}

    \section{Summary of Online Learning Models}\label{app:online learning models}
    \begin{table}[H]\centering
      \scalebox{.90}{\begin{tabular}{p{2.9cm}p{6.6cm}p{5.2cm}}
        \toprule
            \raggedright\bfseries Online learning model & \raggedright\bfseries Assumptions on gradient vector $\vec{\ell}^t$ \textcolor{gray}{\normalfont
(other than bounded norm)}& \bfseries Revealed feedback\\
        \midrule
                Full-feedback\linkfootnote{fn:full information}
            &
                unconstrained: $\vec{\ell}^t \in \bbR^{|\Sigma|}$\newline
            &
                $\vec{\ell}^t$
            \\\colorrule{black!60!white}
                Bandit linear\newline optimization
            &
                unconstrained: $\vec{\ell}^t \in \bbR^{|\Sigma|}$\newline
            &
                $(\vec{\ell}^t)^{\!\top}\vec{\pi}^t \in \bbR$
            \\\colorrule{black!60!white}
                Interactive bandit\newline\textcolor{gray}{(this paper)} &
                $u^t\!: Z \to \bbR$ choice of payoff at each $z\!\in\! Z$\newline
                $s^t_k \in S_k$\hspace{4.5mm} choice of signal at each $k\!\in\! \cK$\vspace{2mm}\newline
                $\vec{\ell}^t$ as in~\cref{eq:ib loss2}
            & \begin{minipage}[t]{\columnwidth}
              \begin{itemize}[topsep=0pt,leftmargin=*]
                \item $z^t \in Z$ terminal state,
                \item $u^t(z^t) = (\vec{\ell}^t)^{\!\top\!} \vec{\pi}^t \in \bbR$ payoff at $z^t$
              \end{itemize}
              \end{minipage}
            \\\bottomrule
      \end{tabular}}

      \vspace{1mm}
      \caption{Comparison between different online learning models and their assumptions on the gradient vector and revealed feedback.}
      \label{tab:comparison of models}
    \end{table}

    \begin{observation} 
    In online learning, the environment picks a secret loss function (gradient vector) $\vec{\ell}^t$ at every iteration $t$. The gradient vector controls the utility of the player, which is defined as the inner product $(\vec{\ell}^t)^\top \vec{\pi}^t$, where $\vec{\pi}^t$ is the strategy output by the agent at iteration $t$.
In the interactive bandit model, it is perhaps more natural to think of the environment as picking a choice of signal $s_k^t$ at each observation node $k\in\cK$, as well as a choice of payoff $u^t(z)$ for each terminal node $z \in Z$. The utility of the agent is then computed as $u^t(z^t)$, where $z^t$ is the (unique) terminal node that is reached when the agent plays according to $\vec{\pi}^t \in \Pi$, and the environment sends the signals $s_k^t$.

The two points of view can be reconciled. In particular, at every iteration $t$ it is possible to identify a gradient vector $\vec{\ell}^t$ for the interactive bandit setting such that $u^t(z^t) = (\vec{\ell}^t)^\top \vec{\pi}^t$, as follows.
    As mentioned in the body, let $\sigma(z)$ be the last sequence (decision node-action pair) on the path from the root of the decision process to terminal node $z \in Z$, and consider the gradient vector $\vec{\ell}^t$, defined as the (unique) vector such that
\begin{equation}\label{eq:ib loss2}
    (\vec{\ell}^t)^{\!\top}\vec{x} = \sum_{z\in Z} u^t(z) \mleft(\prod_{(k, s) \leadsto z} \bbone[s^t_k = s]\mright)x[\sigma(z)]\qquad\forall \vec{x} \in Q.
\end{equation}
    For each terminal node $z \in Z$ in the argument of the sum, the product in parenthesis is $1$ if and only if signals $s_k^t$ picked by the environment are a superset of those on the path from the root to $z$. Furthermore, when $\vec{x} = \vec{\pi}^t$, $x[\sigma(z^t)] = 1$ if and only if the decision maker has decided to play all actions on the path from the root to $z^t$. So, the only terminal node for which the right hand side of \cref{eq:ib loss2} is nonzero is $z = z^t$. Consequently, $(\vec{\ell}^t)^\top \vec{\pi}^t = u^t(z^t)$. 
\end{observation}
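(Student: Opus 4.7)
The plan is to verify directly that substituting $\vec{x} = \vec{\pi}^t$ into the defining equation \cref{eq:ib loss2} collapses the right-hand side to exactly $u^t(z^t)$. This is a bookkeeping argument about sequence-form representations, so the real content is to track what each factor in the summand represents and exploit the tree structure of the decision process. Note first that the formula prescribes the action of $\vec{\ell}^t$ on every $\vec{x}\in Q$, and since $Q$ linearly spans $\bbR^{|\Sigma|}$ (it contains the origin and all canonical sequence-form vectors $\vec{e}_{\sigma(z)}$ reachable from agent policies), $\vec{\ell}^t$ is well-defined as a vector in $\bbR^{|\Sigma|}$.

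Fix a terminal node $z\in Z$ and analyze the two factors multiplying $u^t(z)$ in the sum. The product $\prod_{(k,s)\leadsto z}\bbone[s_k^t = s]$ runs over every observation-node/signal pair on the unique root-to-$z$ path, so it equals $1$ iff the secretly chosen environment signals $s_k^t$ at all such observation nodes agree with the signals required to remain on that path, and $0$ otherwise. Independently, since $\vec{\pi}^t\in\Pi$ is a pure sequence-form strategy, the entry $\pi^t[\sigma(z)]$ is the product of $0/1$ action probabilities along the root-to-$z$ path, so it equals $1$ iff the agent actually plays every action on that path, and $0$ otherwise.

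Next I would invoke uniqueness of tree paths: once the agent commits to a pure strategy $\vec{\pi}^t$ and the environment fixes signals $\{s_k^t\}_{k\in\cK}$, there is exactly one terminal $z\in Z$ whose root-to-$z$ path is jointly consistent with both, namely the realized terminal $z^t$. Consequently, across $z\in Z$, the joint factor $\bigl(\prod_{(k,s)\leadsto z}\bbone[s_k^t=s]\bigr)\cdot\pi^t[\sigma(z)]$ is $1$ when $z=z^t$ and $0$ everywhere else.

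Substituting back, the sum on the right-hand side of \cref{eq:ib loss2} collapses to the single nonzero term $u^t(z^t)\cdot 1 \cdot 1 = u^t(z^t)$, which gives $(\vec{\ell}^t)^{\!\top}\vec{\pi}^t = u^t(z^t)$ as claimed. I do not anticipate any real obstacle: the only subtlety is being careful that the indicator product and the sequence-form entry track \emph{disjoint} pieces of the root-to-$z$ path (observation points vs.\ decision points), so that their conjunction exactly characterizes reachability of $z$ under the combined agent/environment choices. Once this decomposition is made explicit, the argument is purely a one-line uniqueness-of-tree-path observation.
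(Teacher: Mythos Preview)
Your core argument is correct and mirrors the paper's own reasoning: analyze the indicator product (environment side) and the sequence-form entry $\pi^t[\sigma(z)]$ (agent side), observe that their conjunction is $1$ precisely at the realized terminal $z^t$, and conclude the sum collapses. This is exactly the paper's approach, so there is nothing substantive to add.

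One small correction to your parenthetical about well-definedness of $\vec{\ell}^t$: the set $Q$ does \emph{not} contain the origin (every $\vec{q}\in Q$ has $q[\emptyseq]=1$) nor the canonical basis vectors $\vec{e}_{\sigma(z)}$ (these violate the sequence-form constraints $\sum_{a\in A_j} q[j,a] = q[p_j]$). The paper simply asserts uniqueness without justification, and in fact the right-hand side of \cref{eq:ib loss2} is visibly linear in $\vec{x}$ over all of $\bbR^{|\Sigma|}$, so one can just read off $\ell^t[\sigma] = \sum_{z:\sigma(z)=\sigma} u^t(z)\prod_{(k,s)\leadsto z}\bbone[s_k^t=s]$ directly. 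This is orthogonal to the claim you were asked to verify, so it does not affect your main argument.
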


        \section{Analysis and Proofs}

    In order to make the proof approachable, at first we will assume that the
    structure of the decision process is fully known. We will then present and analyze an algorithm that enjoys the properties described in \cref{thm:regret bound general}. Finally, we will show that the algorithm can be implemented as described in \cref{sec:algorithm}, and therefore, that it does not actually need to know the structure of the decision process.


    \begin{figure}[H]\centering
        \begin{tikzpicture}[xscale=.85,yscale=.8]\small
          \node[draw,semithick,align=center,text width=2.0cm,minimum height=1cm] (G) at (0,0) {Gradient estimator};
          \node[draw,semithick,align=center,text width=1.0cm,minimum height=1cm] (Q) at (3,0) {$\cR_Q$};
          \node[draw,semithick,align=center,text width=2.0cm,minimum height=1cm] (S) at (8.2,0) {Strategy sampler};
          \node[draw,semithick,align=center,text width=1.6cm,minimum height=.8cm] (E) at (5.5,1.8) {Explorer};
          \node[] (plus) at (5.5,0) {$+$};
          \draw[thick,dashed,blue] (-1.7,-.8) rectangle (4.2,1);
          \node[above left,blue] at (4.2,1) {$\tilde{\cR}$};
          \draw[thick,dashed,purple] (-1.9,-1) rectangle (10,2.7);
          \node[below right,purple] at (10,2.7) {$\cR$};

          \draw[semithick] (5.5, 0) circle (.2);
          \draw[semithick,->] (E.south) --node[right,fill=white]{$\vec{\xi}^t$} (5.5,.2);
          \draw[semithick,->] (-2.5,0) --node[pos=.1,above,fill=white]{$(\vec{\ell}^t)^\top \vec{\pi}^t$} (G.west);
          \draw[semithick,->] (G.east) --node[above,fill=white]{$\tilde{\vec{\ell}}^t$} (Q.west);
          \draw[semithick,->] (Q.east) --node[above,fill=white,inner xsep=0]{$\vec{y}^t\in Q$} (5.3,0);
          \draw[semithick,->] (5.7,0) --node[above,fill=white]{$\vec{w}^t$} (S.west);
          \draw[semithick,->] (S.east) --node[above,fill=white,pos=.55,inner xsep=0]{$\vec{\pi}^t\in\Pi$} (11.8,0);
        \end{tikzpicture}
        \caption{Conceptual construction of our algorithm for the interactive bandit online learning setting.}\label{fig:analysis app}
    \end{figure}
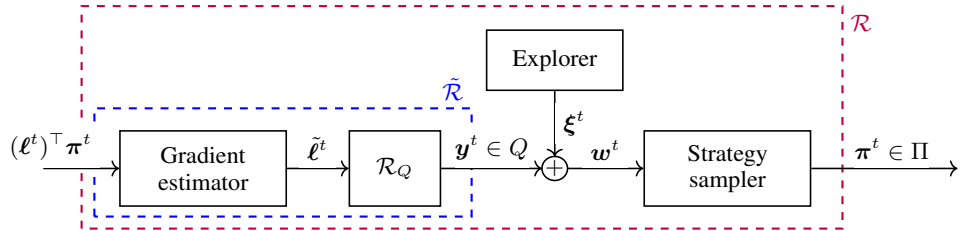

    At a high level, the construction of our interactive bandit regret minimizer works as follows.  At each iteration $t$, we use a full-feedback regret minimizer $\cR_Q$ to output a recommendation for the next sequence-form strategy $\vec{y}^t \in Q$ to be played. Then, we introduce a bias on $\vec{y}^t$, which can be thought of as an exploration term. The resulting biased strategy is $\vec{w}^t \in Q$. Finally, we sample a deterministic policy $\vec{\pi}^t \in \Pi$ starting from $\vec{w}^t$. After the playthrough is over and a terminal node $z^t \in Z$ has been reached, we use the feedback (that is, the terminal node $z^t$ reached in the playthrough, together with its utility $u^t(z^t)$), to construct an unbiased estimator $\tilde{\vec{\ell}}^t$ of the underlying gradient vector that was chosen by the environment.

The algorithm that we have just described accumulates regret from the following three sources. 
\begin{enumerate}
    \item First, it cumulates regret by the full-feedback regret minimizer $\cR_Q$. This scales with the norm of the gradient estimators $\tilde{\vec{\ell}}^t$ as $O(\sqrt{T}\cdot \max_{t=1}^T \|\tilde{\vec{\ell}}^t\| )$.
        \item Second, there is a degradation term due to the fact that we do not exactly follow the recommendations from $\cR_Q$, and instead bias the recommendations. The biasing is necessary to keep the norm of the gradient estimators under control. 
            \item Third, there is a regret degradation term due to the fact that we sample a pure strategy starting from the biased recommendation $\vec{w}^t$.
\end{enumerate}

    Our analysis is split into two main conceptual steps. First, we will quantify the regret degradation that is incurred in passing from $\tilde{\cR}$ to $\cR$ (\cref{fig:analysis}). Specifically, we will study how the regret cumulated by $\tilde{\cR}$,
    \[
        \tilde{R}^T(\vec{\pi}) = \sum_{t=1}^T (\vec{\ell}^t)^\top(\vec{\pi} - \vec{y}^t)
    \]
    is related to the regret cumulated by our overall algorithm,
    \[
        R^T(\vec{\pi}) = \sum_{t=1}^T (\vec{\ell}^t)^\top (\vec{\pi} - \vec{\pi}^t).
    \]
    In this case, the degradation term comes from the fact that the strategy output by $\cR$, that is, $\vec{\pi}^t \in \Pi$, is not the same as the one, $\vec{y}^t$, that was recommended by $\tilde{\cR}$, because an exploration term was added.

The second step in the analysis will be the quantification of the regret degradation that is incurred in passing from $\cR_Q$ to $\tilde{\cR}$ (\cref{fig:analysis}). Specifically, we will study how the regret cumulated by $\tilde{\cR}$ is related to the regret
    \[
        \cR_Q^T(\vec{\pi}) = \sum_{t=1}^T (\tilde{\vec{\ell}}^t)^\top(\vec{\pi} - \vec{y}^t),
    \]
    which is guaranteed to be $O(T^{1/2})$ with high probability. By summing the degradation bounds, the asymptotic regret guarantee will change from $T^{1/2}$ to $T^{3/4}$, while still retaining a high-probability bound.

    Below, we give details about each of the components of our algorithm.

    \paragraph{Gradient Estimator}
    The gradient estimator that we use in our algorithm is given in \cref{lem:gradient estimator}. It is a form of importance-sampling estimator that can be constructed starting from the feedback received by the regret minimizer (that is, the terminal leaf $z^t$ and its corresponding payoff $u^t(z^t)$). It is a particular instantiation of the \emph{outcome sampling} estimator that appeared in the works by~\citet{Lanctot09:Monte}. We follow the formalization of~\citet{Farina20:Stochastic}.

    \begin{lemma}\label{lem:gradient estimator}
        Let $z^t$ be the terminal state in which the process ends at iteration $t$, that is the terminal state reached when the decision maker plays according to the pure strategy $\vec{\pi}^t$ and the environment plays according to the (partially hidden) pure strategy $\vec{y}^t$. Also, let $u^t(z^t)$ be the reward of the decision maker upon termination and let $\sigma(z^t)$ denote the last sequence (decision node-action pair) on the path from the root of the decision process to $z$. Then,
        \[
            \tilde{\vec{\ell}}^t \defeq \frac{u^t(z^t)}{w^t[\sigma(z^t)]}\vec{e}_{\sigma(z^t)}
        \]
        is an unbiased estimator of $\vec{\ell}^t$, that is, $\bbE_t[\tilde{\vec{\ell}}^t] = \vec{\ell}^t$.
    \end{lemma}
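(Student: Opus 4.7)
The plan is to verify the unbiasedness component-by-component, directly from the definition of $\vec{\ell}^t$ given in \cref{eq:ib loss}. For a fixed sequence $\sigma \in \Sigma$, the $\sigma$-th component reads
\begin{equation*}
    \ell^t[\sigma] \;=\; \sum_{z \in Z : \sigma(z) = \sigma} u^t(z) \prod_{(k,s) \leadsto z} \bbone[s^t_k = s],
\end{equation*}
while by construction $\tilde{\vec{\ell}}^t$ has its $\sigma$-th component equal to $u^t(z^t)/w^t[\sigma]$ when $\sigma(z^t) = \sigma$ and zero otherwise.

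The main step will be to compute, conditional on the history up to iteration $t$, the probability that the rollout terminates at an arbitrary node $z \in Z$. The claim I would establish is
\begin{equation*}
    \Pr[z^t = z] \;=\; w^t[\sigma(z)] \cdot \prod_{(k,s) \leadsto z} \bbone[s^t_k = s].
\end{equation*}
The signal-indicator factor is immediate: since the environment's choices $s^t_k \in S_k$ are fixed before the rollout begins, terminal state $z$ is unreachable unless every observation-node signal on its root-to-$z$ path matches. The $w^t[\sigma(z)]$ factor comes from two facts: first, the strategy sampler produces $\vec{\pi}^t$ as an unbiased sample of $\vec{w}^t$, so at each decision node $j$ on the path the agent plays action $a$ with marginal probability $w^t[j,a]/w^t[p_j]$; second, by the definition of the sequence-form representation, the product of these per-node conditional probabilities telescopes along the path to $w^t[\sigma(z)]$ (using $w^t[\emptyseq] = 1$).

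With the reach probability in hand, the expectation is a one-line calculation:
\begin{align*}
    \bbE_t\!\left[\tilde{\ell}^t[\sigma]\right]
    &= \sum_{z : \sigma(z) = \sigma} \Pr[z^t = z] \cdot \frac{u^t(z)}{w^t[\sigma]} \\
    &= \sum_{z : \sigma(z) = \sigma} u^t(z) \prod_{(k,s) \leadsto z} \bbone[s^t_k = s] \;=\; \ell^t[\sigma],
\end{align*}
after the $w^t[\sigma(z)] = w^t[\sigma]$ factor in the numerator cancels the denominator.

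There is no serious obstacle here: the argument is textbook importance sampling. The only detail worth flagging is well-definedness of the division by $w^t[\sigma]$: since $\vec{w}^t = (1-\beta^t)\vec{y}^t + \beta^t \vec{\xi}^t$ with $\beta^t > 0$ and $h^t(j,a) > 0$ forcing every entry of $\vec{\xi}^t$ to be strictly positive, every sequence satisfies $w^t[\sigma] > 0$, so the estimator is finite and the telescoping identity for the reach probability is valid on every root-to-leaf path.
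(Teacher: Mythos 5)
Your proof is correct and follows essentially the same route as the paper's: both reduce the claim to the fact that the strategy sampler satisfies $\bbE_t[\vec{\pi}^t]=\vec{w}^t$, so terminal node $z$ is reached with probability $w^t[\sigma(z)]\prod_{(k,s)\leadsto z}\bbone[s^t_k=s]$ and the importance weight cancels; the paper merely phrases this through the inner product $(\tilde{\vec{\ell}}^t)^{\!\top}\vec{x}$ rather than componentwise. The only remark is that your per-node telescoping detour is unnecessary, since $\Pr[\pi^t[\sigma(z)]=1]=\bbE_t\big[\pi^t[\sigma(z)]\big]=w^t[\sigma(z)]$ yields the reach factor in one step (and holds for both sampling schemes without appealing to per-node conditionals), which is exactly how the paper argues it.
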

    \begin{proof}
      We will prove the lemma by showing that $\bbE_t[(\tilde{\vec{\ell}}^t)^{\!\top} \vec{x}] = (\vec{\ell}^t)^{\!\top} \vec{x}$. 
      By definition, $z^t$ is the (unique) state such that
        \[
            \pi^t[\sigma(z)] = \prod_{(k,s) \leadsto z} \bbone[s^t_k = s] =1.
        \]
        For all other terminal states $z$, $\pi^t[\sigma(z)]\cdot \prod_{(k,s) \leadsto z} \bbone[s^t_k = s] = 0$. Hence,
        \begin{align*}
            \bbE_t\mleft[(\tilde{\vec{\ell}}^t)^{\!\top} \vec{x}\mright] &= \bbE_t\mleft[\frac{u^t(z^t)}{w^t[\sigma(z^t)]}(\vec{e}_{z}^{\top}\vec{x})\mright]\\
            &= \bbE_t\mleft[\sum_{z \in Z} \mleft(\pi^t[\sigma(z)]\cdot \prod_{(k,s) \leadsto z} \bbone[s^t_k = s]\mright) \frac{u^t(z)}{w^t[\sigma(z)]}(\vec{e}_{z}^{\top}\vec{x})\mright]\\
            &= \sum_{z \in Z}\frac{\bbE_t\big[\pi^t[\sigma(z)]\big]}{w^t[\sigma(z)]} u^t(z)\, \mleft(\prod_{(k,s) \leadsto z} \bbone[s^t_k = s]\mright) \, x[\sigma(z)],
        \end{align*}
        Using the hypothesis that $\vec{\pi}^t$ is a (conditionally) unbiased estimator of $\vec{w}^t$ we further obtain
        \[
            \bbE_t\big[\pi^t[\sigma(z)]\big] = \bbE_t[\vec{\pi}^t][\sigma(z)] = w^t[\sigma(z)]
        \]
        and therefore
        \[
            \bbE_t\mleft[(\tilde{\vec{\ell}}^t)^{\!\top} \vec{x}\mright] = \sum_{z \in Z} u^t(z)\mleft(\prod_{(k,s) \leadsto z} \bbone[s^t_k = s]\mright) \, x[\sigma(z)] = (\vec{\ell}^t)^{\!\top} \vec{x},
        \]
        where we used the definition of $\vec{\ell{}}^t$ from \eqref{eq:ib loss} in the last equality.
    \end{proof}

    \paragraph{The Full-Feedback Regret Minimizer}
    Our full-feedback regret minimizer $\cR_Q$ is the counterfactual regret minimization (CFR) regret minimizer~\cite{Zinkevich07:Regret}. CFR
decomposes the task of computing a strategy $\vec{y}^t \in Q$ for the whole decision process into smaller subproblems at each decision point. The local strategies are then computed via $|\cJ|$ independent full-feedback regret minimizers $\cR_j$ ($j\in \cJ$), one per decision node, that at each $t$ observe a specific feedback---the \emph{counterfactual gradient}---that guarantees low global regret across the decision process. CFR guarantees $O(T^{1/2})$ regret in the worst case against any strategy $\hat{\vec{y}}^t \in Q$.

We now review how the counterfactual gradients that are given as feedback at all local regret minimizers $\cR_j$ are constructed. Let $\vec{x}_j^t$, $j\in \cJ$, be the local strategies output by the $\cR_j$'s, and let $\tilde{\vec{\ell}}^t$ be the gradient vector observed at time $t$ by $\cR_Q$. We start by constructing the counterfactual values $V_j$, indexed over $j\in\cJ$, which are recursively defined as
    \[
        V^t_j = \sum_{a \in A_j} x^t_j[a] \mleft(\tilde{{\ell}}^t[j,a] + \sum_{j'\in \cJ: p_{j'} = (j,a)} V^t_{j'}\mright).
    \]
    Then, for all $j\in\cJ$ we construct the counterfactual gradient $\ell^t_j$ by setting $\ell_j^t[a] = \tilde{{\ell}}^t[j,a] + \sum_{j' \in\cJ : p_{j'} = (j,a)} V^t_{j'}$ for all $a \in A_j$.

    Let $R_j^T(\hat{\vec{\pi}}_j)$ denote the regret cumulated up to time $T$ by the local full-feedback regret minimizer $\cR_j$ compared to strategy $\hat{\vec{\pi}}_j^t \in \Delta^{|A_j|}$. Then, the regret cumulated by the CFR algorithm is known to satisfy the regret bound in \cref{lem:RQ}.
    \begin{lemma}[\cite{Farina19:Online}]\label{lem:RQ}
        At all $T$, for all $\hat{\vec{\pi}} \in Q$,
        \[
            R_Q(\hat{\vec{\pi}}) \le \max_{\vec{q}\in Q} \mleft\{\sum_{j\in \cJ} q[p_j]\cdot \max_{\hat{\vec{\pi}}_j \in \Delta^{|A_j|}} R_j^T(\hat{\vec{\pi}}_j)\mright\}.
        \]
    \end{lemma}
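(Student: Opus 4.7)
The plan is to prove the standard counterfactual regret decomposition: write the regret of $\cR_Q$ against any $\hat{\vec{\pi}} \in Q$ as a reach-probability-weighted sum of the local regrets $R_j^T(\hat{\vec{\pi}}_j)$, and then pass to the worst-case local regrets and worst-case reach coefficients to obtain the stated bound. The central analytical tool is the counterfactual gradient $\vec{\ell}_j^t$ whose recursive definition already packages both the direct feedback $\tilde{\ell}^t[j,a]$ and the contribution of the subtree rooted at $(j,a)$ via the child values $V^t_{j'}$.

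First, I would establish a single-node swap identity: for any decision node $j \in \cJ$ and any alternative local strategy $\vec{x}_j \in \Delta^{|A_j|}$, let $\vec{y}' \in Q$ be the sequence-form strategy obtained from $\vec{y}^t$ by replacing only the local strategy at $j$ by $\vec{x}_j$ and leaving every other local strategy equal to $\vec{x}_{j'}^t$. Then
\[
    (\tilde{\vec{\ell}}^t)^\top (\vec{y}' - \vec{y}^t) = y^t[p_j] \cdot (\vec{x}_j - \vec{x}_j^t)^\top \vec{\ell}_j^t.
\]
This follows by expanding both sides in sequence form and using the chain-rule factorization of $\vec{y}'$ at $(j,a)$ and at all downstream descendants. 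Aggregated over terminal sequences reachable through $(j,a)$, the downstream changes collapse inductively into exactly the sums $\sum_{j' : p_{j'} = (j,a)} V^t_{j'}$ that appear in the definition of $\vec{\ell}_j^t$, while sequences that do not pass through $j$ cancel because $\vec{y}'$ agrees with $\vec{y}^t$ on them.

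Second, I would run a hybrid telescoping argument. Decompose $\hat{\vec{\pi}} \in Q$ into local strategies $\hat{\vec{\pi}}_j$ (set arbitrarily at nodes with $\hat{\pi}[p_j] = 0$, where the coefficient will vanish). Enumerate $\cJ$ in topological order (ancestors before descendants) as $j_1, \dots, j_n$, and let $\vec{y}^{t,k}$ be the sequence-form strategy that uses $\hat{\vec{\pi}}_{j_i}$ at $j_i$ for $i \le k$ and $\vec{x}_{j_i}^t$ elsewhere, so that $\vec{y}^{t,0} = \vec{y}^t$ and $\vec{y}^{t,n} = \hat{\vec{\pi}}$. At step $k$, all ancestors of $j_k$ have already been switched, so $y^{t,k-1}[p_{j_k}] = \hat{\pi}[p_{j_k}]$; meanwhile, all descendants of $j_k$ still use $\vec{x}^t$, so the counterfactual gradient at $j_k$ remains exactly the $\vec{\ell}_{j_k}^t$ fed to $\cR_{j_k}$. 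Applying the single-node swap identity at each step and telescoping yields
\[
    (\tilde{\vec{\ell}}^t)^\top (\hat{\vec{\pi}} - \vec{y}^t) = \sum_{j \in \cJ} \hat{\pi}[p_j] \cdot (\hat{\vec{\pi}}_j - \vec{x}_j^t)^\top \vec{\ell}_j^t.
\]
Summing over $t = 1, \dots, T$ gives $R_Q(\hat{\vec{\pi}}) = \sum_{j \in \cJ} \hat{\pi}[p_j] \cdot R_j^T(\hat{\vec{\pi}}_j)$. Upper bounding each $R_j^T(\hat{\vec{\pi}}_j) \le \max_{\hat{\vec{\pi}}_j' \in \Delta^{|A_j|}} R_j^T(\hat{\vec{\pi}}_j')$ and then replacing the reach coefficients by their supremum over $\vec{q} \in Q$ (which is a valid upper bound since $\hat{\vec{\pi}} \in Q$) delivers the stated inequality.

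The main obstacle is the single-node swap identity. Swapping the local strategy at $j$ modifies not only the sequence-form value $y[j,a]$ directly but also every $y[j',a']$ for decision nodes $j'$ downstream of $j$, through the multiplicative chain-rule factorization. The calculation must carefully check that, when inner-producted against $\tilde{\vec{\ell}}^t$, these downstream perturbations aggregate precisely into the inner sums $\sum_{j' : p_{j'} = (j,a)} V^t_{j'}$ that appear inside the recursive definition of $\vec{\ell}_j^t$, so that the net change factors cleanly as $y^t[p_j] \cdot (\vec{x}_j - \vec{x}_j^t)^\top \vec{\ell}_j^t$. Once this identity is in hand, the hybrid telescoping and the final pair of maximizations are essentially routine.
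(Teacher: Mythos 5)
Your proposal is correct, but note that the paper itself does not prove \cref{lem:RQ} at all: it is imported verbatim, with a citation, from \citet{Farina19:Online} (it is the laminar/counterfactual regret decomposition, going back to the original CFR analysis of \citet{Zinkevich07:Regret}). What you have written is a sound self-contained derivation of exactly that cited result: the single-node swap identity is the right key step, and your verification that the downstream perturbations collapse into $\sum_{j' : p_{j'}=(j,a)} V^t_{j'}$ is precisely how the counterfactual gradient is designed to behave. Your hybrid telescoping in topological order is also the standard route, and you correctly flag the two points where it could go wrong: the identity must be applied in the slightly generalized form where ancestors of $j_k$ already play $\hat{\vec{\pi}}$ (so the reach coefficient becomes $\hat{\pi}[p_{j_k}]$, which is legitimate because the proof of the swap identity only uses the parent reach of the hybrid and the fact that all strategies strictly below $j_k$ are still $\vec{x}^t$, so the gradient fed to $\cR_{j_k}$ is unchanged), and nodes with $\hat{\pi}[p_j]=0$ contribute nothing regardless of how their local comparator is chosen. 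The final passage from the exact identity $R_Q^T(\hat{\vec{\pi}})=\sum_{j}\hat{\pi}[p_j]\,R_j^T(\hat{\vec{\pi}}_j)$ to the stated bound (maximize each local regret over its simplex, then maximize the reach weights over $Q$) is routine and valid since each $\max_{\hat{\vec{\pi}}_j}R_j^T(\hat{\vec{\pi}}_j)\ge 0$ and $\hat{\vec{\pi}}\in Q$. In short: no gap; you have supplied the proof that the paper delegates to its reference, by essentially the same argument used there.
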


    \paragraph{Exploration Term} The role of the exploration term is to reduce the norm of the gradient estimators. We bias the strategies $\vec{y}^t$ output by $\cR_Q$ by taking a convex combination with the \emph{exploration strategy} $\vec{\xi}^t \in Q$ defined as the sequence-form strategy that picks action $a \in A_j$ at decision node $j\in\cJ$ with probability proportional to
\[
    \xi^t[j,a] \propto \frac{h^t(j,a)}{\sum_{a\in A_j} h^t(j,a)},
\]
where $h^t$ is a generic exploration function $\Sigma \to \bbR_{> 0}$.

Let $\sigma(z)$ denote the last sequence (decision node-action pair) encountered on the path from the root of the decision process to terminal node $z \in Z$. A key quantity about $\vec{\xi}^t$, that will be important to quantify the effect of the biasing on the norm of the gradient estimator, is the following:
\[
    \rho^t \defeq \max_{z \in Z} \frac{1}{\xi^t[\sigma(z)]} = \max_{z \in Z}\prod_{(j,a)\leadsto z} \mleft(\frac{\sum_{a' \in A_j} h^t(j,a')}{h^t(j,a)}\mright).
\]

At all times $t$, we construct the biased strategy $\vec{w}^t$ starting from the unbiased strategy $\vec{y}^t$ and the exploration strategy $\vec{\xi}^t$ as
\begin{equation}\label{def:biased estimate}
    \vec{w}^t \defeq (1-\beta^t)\cdot \vec{y}^t + \beta^t \cdot \vec{\xi}^t.
\end{equation}

    We call the exploration function $h^t$ that assigns the constant value $1$ to all sequences the \emph{uniform exploration strategy}. We also study an exploration strategy for which we can prove better worst-case convergence speed. Specifically, we call the exploration function $h^t$ that assigns, to each sequence $(j,a)\in\Sigma$, the number of terminal nodes under $(j,a)$ the \emph{balanced exploration strategy}. For this strategy, the minimum reach probability across all terminal nodes, which is important for our worst-case convergence guarantee, scales linearly in the game size.
    \begin{lemma}[\cite{Farina20:Stochastic}]\label{lem:rho for balanced}
        When $h^t$ is a perfect measure of the number of terminal nodes under each sequence $(j,a) \in \Sigma$, then $\rho^t \le |\Sigma| - 1$.
    \end{lemma}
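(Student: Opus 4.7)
The plan is to prove the lemma by unrolling the definition of the sequence-form reach $\xi^t[\sigma(z)]$ along each root-to-terminal path, applying a telescoping cancellation that exploits the ``count the terminals'' form of $h^t$, and then completing the argument with a short structural induction on the decision tree.

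First, I would fix an arbitrary terminal $z \in Z$ and let $(j_1, a_1), \ldots, (j_m, a_m)$ denote the decision-action pairs on the path from the root of the decision process to $z$. By multiplicativity of the sequence-form representation,
\[
    \xi^t[\sigma(z)] = \prod_{i=1}^{m} \frac{h^t(j_i, a_i)}{\sum_{a' \in A_{j_i}} h^t(j_i, a')}.
\]
Writing $N(j, a) \defeq h^t(j,a)$ for the number of terminals under sequence $(j,a)$, and $M(j) \defeq \sum_{a' \in A_j} N(j, a')$ for the number of terminals reachable through decision node $j$, this rewrites as
\[
    \frac{1}{\xi^t[\sigma(z)]} = \prod_{i=1}^{m} \frac{M(j_i)}{N(j_i, a_i)}.
\]

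Next, I would apply the telescoping inequality $M(j_{i+1}) \le N(j_i, a_i)$, valid for $1 \le i < m$ because $j_{i+1}$ lies strictly below $(j_i, a_i)$ in the tree, so every terminal reachable from $j_{i+1}$ is also reachable through $(j_i, a_i)$. Regrouping the product as
\[
    \frac{1}{\xi^t[\sigma(z)]} = M(j_1) \cdot \prod_{i=1}^{m-1} \frac{M(j_{i+1})}{N(j_i, a_i)} \cdot \frac{1}{N(j_m, a_m)}
\]
shows that each middle factor is at most $1$ and the last factor is at most $1$, yielding the local inequality $1/\xi^t[\sigma(z)] \le M(j_1)$.

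To sharpen this into the stated bound $|\Sigma|-1$, I would conclude with a structural induction over the decision tree. For every node $v$ of the decision process, define $\phi(v)$ to be the maximum of the restricted product taken only over terminals in the subtree of $v$; the telescoping identity above yields the recursion $\phi(j) = \max_{a \in A_j} (M(j)/N(j,a))\cdot \phi(v_a)$ at a decision node $j$ and $\phi(k) = \max_{s \in S_k} \phi(v_s)$ at an observation node $k$. I would then prove by induction on the height of $v$ that $\phi(v)$ is bounded by the number of non-empty sequences contained in the subtree rooted at $v$: the base case for terminals is immediate, the observation-node case follows from the $\max \le \mathrm{sum}$ inequality, and the decision-node case combines the telescoping inequality with the inductive hypothesis applied to each child. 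Applied at the root, this yields $\rho^t = \phi(\mathrm{root}) \le |\Sigma|-1$.

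The main obstacle I anticipate is the decision-node inductive step: the multiplicative factor $M(j)/N(j,a)$ can in principle be as large as the total number of terminals under $j$, and one must show that the $|A_j|$ new sequences introduced at $j$, together with the inductive bounds on the child subtrees, suffice to absorb it. Getting this accounting tight is what turns the cruder bound $M(j_1) \le |Z|$ into the desired polynomial-in-$|\Sigma|$ bound.
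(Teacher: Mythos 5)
Your telescoping step is sound, but on its own it only proves the weaker bound $\rho^t \le |Z|$: after the cancellations you are left with $1/\xi^t[\sigma(z)] \le M(j_1)$, and $M(j_1)$ counts terminal \emph{nodes}, which in a TFSDM can vastly exceed $|\Sigma|-1$ because observation nodes fan out without creating new sequences. The step you defer---the decision-node case of your structural induction---is exactly where the argument breaks, and it cannot be repaired under the literal hypothesis that $h^t(j,a)$ is the number of terminal nodes under $(j,a)$: the invariant ``$\phi(v)\le$ number of sequences in the subtree rooted at $v$'' is false. Take a root decision node $j_1$ with actions $a,b$, where $b$ leads to an observation node with $100$ signals each ending the process, while $a$ leads to a decision node $j_2$ with two actions, each ending the process. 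Then $h^t(j_1,a)=2$, $h^t(j_1,b)=100$, and for a terminal $z$ below $j_2$,
\[
    \frac{1}{\xi^t[\sigma(z)]} \;=\; \frac{102}{2}\cdot\frac{2}{1} \;=\; 102,
\]
whereas $|\Sigma|-1=4$. So $\phi(j_1)=102$ exceeds the sequence count of its subtree, and no accounting of the $|A_j|$ new sequences at $j$ can absorb the factor $M(j)/N(j,a)$, which is governed by terminal-node counts rather than sequence counts.

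For comparison: the paper does not reprove this lemma (it imports it from Farina et al.\ 2020), and the $|\Sigma|-1$ bound there corresponds to the \emph{balanced} exploration function that counts terminal \emph{sequences}, i.e.\ $h^t(j,a)$ equal to the number of distinct last sequences $\sigma(z)$ of terminals reachable under $(j,a)$, not the number of terminal nodes. With that choice your own scheme closes immediately: writing $L(j,a)$ for the terminal-sequence count and $L(j)=\sum_{a'}L(j,a')$, one has $L(j_{i+1})\le L(j_i,a_i)$ exactly as in your telescoping step, and the leftover factor $L(j_1)$ is now at most the total number of nonempty sequences, giving $\rho^t\le|\Sigma|-1$ (in the example above this yields $3\le 4$). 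As written, your proposal establishes only $\rho^t\le|Z|$; the sharpening to $|\Sigma|-1$ with node counts is not a missing bookkeeping detail but an unattainable step, so you should either switch $h^t$ to sequence counts or settle for the $|Z|$ bound.
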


\paragraph{Strategy Sampler} The role of the strategy sampler is to provide an \emph{unbiased} estimator $\vec{\pi}^t \in \Pi$ of $\vec{w}^t$. In fact, any unbiased estimator can be used in our framework. We start by describing a sampling scheme for sequence-form strategies that is folklore in the literature. In the natural sampling scheme, given a sequence-form strategy $\vec{q} \in Q$, at all decision nodes $j \in \cJ$ an action is sampled according to the distribution $q[j,a]/q[p_j] : a \in A_j$.

\begin{itemize}
    \item \emph{Upfront-flipping strategy sampling}
Since by construction $\vec{w}^t = (1-\beta^t)\cdot \vec{y}^t + \beta^t \cdot \vec{\xi}^t$, sampling from $\vec{w}^t$ can be done as follows. First, we flip a biased coin, where the probability of heads is $1-\beta^t$ and the probability of tails is $\beta^t$. If heads comes up, we sample actions according to the natural sampling scheme applied to the sequence-form strategy $\vec{y}^t$ output by $\cR_Q$. Otherwise, we sample actions according to the natural sampling scheme applied to the exploration strategy $\vec{\xi}^t$. We call the strategy sampler just described the \emph{upfront-flipping} sampling scheme.

    \item \emph{On-path-flipping strategy sampling} We now describe a different sampling scheme, which we coin the \emph{on-path-flipping} sampling scheme. First, we compute the strategy $\vec{w}^t$ by explicitly taking the convex combination between $\vec{y}^t$ and the exploration term $\vec{\xi}^t$. Then we apply the natural sampling scheme for sequence-form strategies to $\vec{w}^t$. In the on-path-flipping sampling scheme, the exploration and the exploitation are interleaved at each decision node in a counterintuitive way.
\end{itemize}
Because both sampling schemes are unbiased, our theory, without changes, applies to both of these sampling schemes. Later in this appendix, we report experiments on both of these sampling schemes. 

    \subsection{Relationship Between Exploration and Norm of Counterfactual Gradients}\label{sec:sparse update}

    The gradient $\tilde{\vec{\ell}}^t$ that is given to $\cR_Q$ as feedback is given in \cref{lem:gradient estimator}. It has zero entries everywhere, except for the sequence $\sigma(z^t)$. Consequently, the counterfactual values $V_j^t$ are $0$ everywhere, except for the decision nodes that were traversed on the path from the root to the terminal node $z^t \in Z$. In turn, this means that all counterfactual gradients are $0$, except for the sequences that are traversed. for them, the construction of the counterfactual gradients reduces to the \emph{regret update phase} described in \cref{sec:algorithm}.

    The norms of the counterfactual gradients are maximum at the leaves. In particular, the following is known.
    \begin{lemma}[\cite{Lanctot09:Monte,Farina20:Stochastic}]
        Let $\Delta$ be the maximum range of payoffs that can be selected by the environment at all times. All counterfactual gradients have norm upper bounded as
        \[
            \|\vec{\ell}_j^t\|_2 \le \frac{\Delta\rho^t}{\beta^t} \quad\forall j\in\cJ.
        \]
        Furthermore, the loss estimate $\tilde{\vec{\ell}}^t$ satisfies
        \[
            (\tilde{\vec{\ell}}^t)^\top (\vec{x} - \vec{x}') \le \frac{\Delta \rho^t}{\beta^t} \quad \forall \vec{x},\vec{x}' \in Q.
        \]
    \end{lemma}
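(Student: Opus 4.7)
The plan is to exploit the extreme sparsity of the estimator $\tilde{\vec{\ell}}^t$ exposed in the preceding paragraph: it has at most one nonzero entry, located at the sequence $\sigma(z^t)$, with magnitude $|u^t(z^t)|/w^t[\sigma(z^t)]$. Everything in the lemma should follow from controlling this single entry. First I would lower bound the denominator via the convex-combination structure of $\vec{w}^t$: since $\vec{w}^t = (1-\beta^t)\vec{y}^t + \beta^t \vec{\xi}^t$ and $\vec{y}^t \in Q$ has nonnegative entries, we have $w^t[\sigma(z^t)] \geq \beta^t \cdot \xi^t[\sigma(z^t)] \geq \beta^t/\rho^t$ by the very definition of $\rho^t = \max_{z\in Z} 1/\xi^t[\sigma(z)]$. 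Combined with $|u^t(z^t)| \le \Delta$, this immediately gives the pointwise bound $|\tilde{\ell}^t[\sigma(z^t)]| \le \Delta \rho^t /\beta^t$ on the sole nonzero entry of $\tilde{\vec{\ell}}^t$.

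For the counterfactual gradients $\vec{\ell}_j^t$, I would propagate this bound up the tree via induction. Let $(j_1,a_1),\dots,(j_m,a_m)$ denote the decision-node/action sequence leading to $z^t$, so that $\sigma(z^t) = (j_m, a_m)$. Because $\tilde{\vec{\ell}}^t$ is zero outside $(j_m,a_m)$, a short induction on the CFR recursion $V_j^t = \sum_{a \in A_j} x_j^t[a](\tilde{\ell}^t[j,a] + \sum_{j': p_{j'}=(j,a)} V_{j'}^t)$ shows that $\vec{\ell}_j^t = \vec{0}$ for every $j \notin \{j_1,\dots,j_m\}$, and that for $j = j_i$ the vector $\vec{\ell}_{j_i}^t$ has a single nonzero entry at $a_i$ of value $\bigl(\prod_{k=i+1}^{m} x^t_{j_k}[a_k]\bigr)\cdot \tilde{\ell}^t[\sigma(z^t)]$. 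Since each $x^t_{j_k}[a_k] \in [0,1]$ the product in parenthesis is at most one, so $\|\vec{\ell}_{j_i}^t\|_2 \le |\tilde{\ell}^t[\sigma(z^t)]| \le \Delta\rho^t/\beta^t$, which proves the first claim.

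For the second claim, I would again invoke sparsity: for any $\vec{x},\vec{x}' \in Q$,
\[
    (\tilde{\vec{\ell}}^t)^\top (\vec{x} - \vec{x}') = \tilde{\ell}^t[\sigma(z^t)]\,\bigl(x[\sigma(z^t)] - x'[\sigma(z^t)]\bigr).
\]
Because every sequence-form strategy in $Q$ satisfies $q[\sigma] \in [0,1]$ for all $\sigma \in \Sigma$, the difference $x[\sigma(z^t)] - x'[\sigma(z^t)]$ lies in $[-1,1]$, so the inner product is upper bounded in absolute value by $|\tilde{\ell}^t[\sigma(z^t)]| \le \Delta\rho^t/\beta^t$, yielding the second inequality.

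The only genuinely delicate step is the inductive claim that the CFR counterfactual gradient at $j_i$ collapses to the single scalar $\bigl(\prod_{k>i} x^t_{j_k}[a_k]\bigr)\tilde{\ell}^t[\sigma(z^t)]$; this hinges on the fact that the siblings of the played path carry no $V^t$ mass because their subtrees contain no nonzero entry of $\tilde{\vec{\ell}}^t$. Once that is verified, both bounds are immediate consequences of $w^t[\sigma(z^t)] \ge \beta^t/\rho^t$ and $\|\vec{u}^t\|_\infty \le \Delta$, without any dependence on the stochasticity of $\vec{\pi}^t$ beyond what is already encoded in $\vec{w}^t$.
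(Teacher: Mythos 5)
Your proof is correct, and it follows exactly the reasoning the paper relies on (but does not spell out, since this lemma is only cited from \citet{Lanctot09:Monte} and \citet{Farina20:Stochastic}): the outcome-sampling estimator has a single nonzero entry bounded via $w^t[\sigma(z^t)] \ge \beta^t \xi^t[\sigma(z^t)] \ge \beta^t/\rho^t$, and the sparsity propagates through the CFR recursion so that only the on-path counterfactual gradients are nonzero, each being the single entry $\tilde{\ell}^t[\sigma(z^t)]$ scaled by a product of probabilities in $[0,1]$ --- precisely the observation made informally in the paper's subsection on the relationship between exploration and the norm of the counterfactual gradients. Your inductive computation of $\vec{\ell}^t_{j_i}$ and the $[0,1]$ bound on sequence-form entries used for the second inequality are both sound, so no gap remains.
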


    \subsection{Relationship Between $\cR$ and $\tilde{\cR}$}

    As a first step in our analysis, we establish an important relationship between the regret cumulated by $\cR$ and $\tilde{\cR}$. Fundamentally, it quantifies the degradation in the regret incurred from playing $\vec{\pi}^t$ instead of $\vec{y}^t$ as recommended. The degradation is kept under control by the fact that the expectation of the output $\vec{\pi}^t$ is close to $\vec{y}^t$ when $\beta$ is small.
    Before we state the central result of this subsection (\cref{prop:relationship R Rtilde}), we need the following simple bound.

    \begin{lemma}\label{lem:gap sum}
        Let $\Delta$ be the payoff range of the interaction, that is the maximum absolute value of any payoff that can be selected by the environment at any time. At all $T$, it holds that
        \[
            \sum_{t=1}^T (\vec{\ell}^t)^\top (\vec{y}^t - \vec{w}^t) \le \Delta \sum_{t=1}^T \beta^t.
        \]
    \end{lemma}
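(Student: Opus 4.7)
The plan is to unfold the definition of $\vec{w}^t$ given in \eqref{def:biased estimate}, so that each term in the sum reduces to $\beta^t$ times the gap between two expected payoffs, and then to bound that gap by the payoff range $\Delta$.

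More concretely, first I would observe that from $\vec{w}^t \defeq (1-\beta^t)\vec{y}^t + \beta^t \vec{\xi}^t$ one gets, by a direct calculation,
\[
    \vec{y}^t - \vec{w}^t = \beta^t(\vec{y}^t - \vec{\xi}^t),
\]
so that $(\vec{\ell}^t)^\top(\vec{y}^t - \vec{w}^t) = \beta^t\, (\vec{\ell}^t)^\top(\vec{y}^t - \vec{\xi}^t)$. Both $\vec{y}^t$ and $\vec{\xi}^t$ are valid sequence-form strategies in $Q$, so $(\vec{\ell}^t)^\top \vec{y}^t$ and $(\vec{\ell}^t)^\top \vec{\xi}^t$ are two expected payoffs of the agent against the environment at time $t$. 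Since every realized payoff $u^t(z)$ lies in an interval of length at most $\Delta$, any such expected payoff lies in that same interval, and therefore
\[
    (\vec{\ell}^t)^\top(\vec{y}^t - \vec{\xi}^t) \le \Delta.
\]

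Combining the two observations and summing from $t=1$ to $T$, using $\beta^t \ge 0$ so that the pointwise bound survives multiplication, yields
\[
    \sum_{t=1}^T (\vec{\ell}^t)^\top(\vec{y}^t - \vec{w}^t) = \sum_{t=1}^T \beta^t\,(\vec{\ell}^t)^\top(\vec{y}^t - \vec{\xi}^t) \le \Delta \sum_{t=1}^T \beta^t,
\]
which is exactly the claim. There is no real obstacle in the argument: the only thing to be a bit careful about is the interpretation of the ``payoff range'' $\Delta$ as bounding the difference of two expected payoffs (rather than, say, $2\Delta$ bounding it by naive triangle inequality on $|(\vec{\ell}^t)^\top \vec{x}|$); this is justified because mixing strategies yields convex combinations of the values $u^t(z)$, which all share a common interval of length $\Delta$.
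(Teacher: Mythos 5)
Your proposal is correct and follows essentially the same route as the paper's proof: expand $\vec{w}^t = (1-\beta^t)\vec{y}^t + \beta^t\vec{\xi}^t$, factor out $\beta^t$, and bound $(\vec{\ell}^t)^\top(\vec{y}^t-\vec{\xi}^t)$ by $\Delta$ using that both inner products are convex combinations of the payoffs. Your extra remark on interpreting $\Delta$ as the payoff range (rather than incurring a factor $2$) matches the paper's intended reading of ``the definition of $\Delta$.''
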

    \begin{proof}Using simple algebraic manipulations,
        \begin{align*}
            \sum_{t=1}^T (\vec{\ell}^t)^\top (\vec{y}^t - \vec{w}^t)
              &= \sum_{t=1}^T (\vec{\ell}^t)^\top \mleft(\vec{y}^t - \big((1-\beta^t)\vec{y}^t + \beta^t \vec{b}^t \big)\mright) \\
              &= \sum_{t=1}^T \beta^t(\vec{\ell}^t)^\top (\vec{y}^t - \vec{b}^t) \le \Delta \sum_{t=1}^T \beta^t,
        \end{align*}
        where the last inequality follows from the definition of $\Delta$.
    \end{proof}

    With \cref{lem:gap sum} we are ready to analyze the relationship between $\cR$ and $\tilde{\cR}$.

    \begin{proposition}\label{prop:relationship R Rtilde app}
        At all $T$, for all $p\in(0,1)$ and $\vec{\pi} \in \Pi$, with probability at least $1-p$,
        \[
            R^T(\vec{\pi}) \le \tilde{R}^T(\vec{\pi}) +  \Delta\mleft( \sqrt{2T\log\frac{1}{p}}+ \sum_{t=1}^T \beta^t\mright).
        \]
    \end{proposition}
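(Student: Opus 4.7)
The plan is to decompose the difference $R^T(\vec{\pi}) - \tilde R^T(\vec{\pi}) = \sum_{t=1}^T (\vec{\ell}^t)^\top(\vec{y}^t - \vec{\pi}^t)$ by inserting the biased mixture $\vec{w}^t$:
$$\vec{y}^t - \vec{\pi}^t \;=\; (\vec{y}^t - \vec{w}^t) \;+\; (\vec{w}^t - \vec{\pi}^t).$$
The first summand is deterministic given the history and is exactly the quantity handled by \cref{lem:gap sum}, which immediately yields $\sum_{t=1}^T (\vec{\ell}^t)^\top(\vec{y}^t - \vec{w}^t) \le \Delta \sum_{t=1}^T \beta^t$. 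All of the randomness that remains to be controlled thus lives in the second summand.

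For that second summand, I would set up the filtration $\{\mathcal{F}_t\}$ that records the environment's adversarial choices $u^t$ and $\{s^t_k\}$, the past outcomes, and all algorithm internals used to produce $\vec{y}^t$ and the mixing $\vec{w}^t = (1-\beta^t)\vec{y}^t + \beta^t \vec{\xi}^t$, but that stops \emph{just before} the strategy sampler draws $\vec{\pi}^t$ from $\vec{w}^t$. The interactive bandit model forces the environment to commit its choices before observing $\vec{\pi}^t$, so $\vec{\ell}^t$ is $\mathcal{F}_t$-measurable; the sampler is unbiased, so $\mathbb{E}[\vec{\pi}^t \mid \mathcal{F}_t] = \vec{w}^t$. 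Consequently, $X_t := (\vec{\ell}^t)^\top(\vec{w}^t - \vec{\pi}^t)$ is a martingale difference sequence with respect to $\{\mathcal{F}_t\}$. Moreover, both $(\vec{\ell}^t)^\top \vec{\pi}^t = u^t(z^t)$ and $(\vec{\ell}^t)^\top \vec{w}^t = \mathbb{E}[u^t(z^t) \mid \mathcal{F}_t]$ lie in a common payoff interval of width at most $\Delta$, so $|X_t| \le \Delta$ almost surely.

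Applying the one-sided Azuma-Hoeffding inequality to $\sum_{t=1}^T X_t$ then yields $\sum_{t=1}^T X_t \le \Delta \sqrt{2T \log(1/p)}$ with probability at least $1-p$, and summing with the deterministic bound from \cref{lem:gap sum} produces the stated inequality. The main obstacle is the bookkeeping around the filtration: to preserve the conditional-mean-zero property of $X_t$ against an adversarial opponent, one must carefully verify that $\vec{\ell}^t$ is pinned down before the randomness used by the strategy sampler is drawn. This is precisely the commitment property baked into the interactive bandit model of \cref{sec:online learning}, so once the filtration is laid out correctly the remainder of the argument reduces to a direct invocation of Azuma-Hoeffding; no delicate union bound or estimator-norm manipulation is required in this step (those pieces are deferred to \cref{prop:relationship Rtilde RQ}).
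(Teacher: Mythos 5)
Your proposal is correct and follows essentially the same route as the paper's own proof: insert the biased strategy $\vec{w}^t$ into the difference, dispose of $(\vec{\ell}^t)^\top(\vec{y}^t-\vec{w}^t)$ via the deterministic bound $\Delta\sum_{t=1}^T\beta^t$, and control $(\vec{\ell}^t)^\top(\vec{w}^t-\vec{\pi}^t)$ as a bounded martingale difference sequence ($|X_t|\le\Delta$) via Azuma--Hoeffding, exactly as the paper does with $d^t=(\vec{\ell}^t)^\top(\vec{\pi}^t-\vec{w}^t)$ up to a sign. Your explicit filtration bookkeeping is just a more careful spelling-out of the paper's statement that $\vec{\ell}^t$ is conditionally independent of $\vec{\pi}^t$ given the history and that the sampler is unbiased.
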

    \begin{proof}
        Introduce the discrete-time stochastic process
        \[
            d^t \defeq ({\vec{\ell}}^t)^{\!\top} (\vec{\pi}^t - \vec{w}^t), \quad t \in \{1, 2,\dots\}.
        \]
        From the online learning model hypotheses, we have that ${\vec{\ell}}^t$ is conditionally independent from $\vec{\pi}^t$ and $\vec{w}^t$ given all past choices of the algorithm up to time $t-1$, and since $\bbE_t[\vec{\pi}^t] = \vec{w}^t$ by construction, $d^t$ is a martingale difference sequence. Furthermore, some elementary algebra reveals that
        \begin{align*}
            \sum_{t=1}^T d^t &= \sum_{t=1}^T ({\vec{\ell}}^t)^\top (\vec{\pi}^t - \vec{w}^t)\\
                &= \sum_{t=1}^T ({\vec{\ell}}^t)^\top (\vec{\pi} - \vec{y}^t) - \sum_{t=1}^T ({\vec{\ell}}^t)^\top (\vec{\pi} - \vec{\pi}^t) + \sum_{t=1}^T ({\vec{\ell}}^t)^\top (\vec{y}^t - \vec{w}^t) \\
                &= \tilde{R}^T(\vec{\pi}) - R^T(\vec{\pi}) + \sum_{t=1}^T ({\vec{\ell}}^t)^\top (\vec{y}^t - \vec{w}^t) \\
                &\le \tilde{R}^T(\vec{\pi}) - R^T(\vec{\pi}) + \Delta \sum_{t=1}^T \beta^t, \numberthis{eq:sum dt 1}
        \end{align*}
        where the last inequality follows from \cref{lem:gap sum}.
        Since $|\delta^t| \le \Delta$ for all $t$, using the Azuma-Hoeffding inequality~\cite{Hoeffding63:Probability,Azuma67:Weighted} we have that
        \begin{align*}
            1-p &\le \bbP\mleft[\sum_{t=1}^T d^t \ge -\Delta\sqrt{2T\log\frac{1}{p}}\mright]\\
                &\le \bbP\mleft[\tilde{R}^T(\vec{\pi}) - R^T(\vec{\pi}) + \Delta \sum_{t=1}^T \beta^t \ge -\Delta\sqrt{2T\log\frac{1}{p}}\mright]\\
                &= \bbP\mleft[R^T(\vec{\pi}) \le \tilde{R}^T(\vec{\pi}) + \Delta \mleft(\sqrt{2T\log\frac{1}{p}} + \sum_{t=1}^T \beta^t\mright)\mright],
        \end{align*}
        where the second inequality used \eqref{eq:sum dt 1}.
    \end{proof}

    \subsection{Relationship Between $\tilde{\cR}$ and $\cR_Q$}

    The next proposition establishes the important relationship between the regret cumulated by $\tilde{\cR}$ and $\cR_Q$. Unlike \cref{prop:relationship R Rtilde}, where a regret degradation was suffered for playing a strategy different that the recommended one, in this case the regret degradation comes from the fact that the gradient that is observed by $\cR_Q$ is different from that observed by $\tilde{\cR}$. However, as we will show the degradation is kept under control by the fact that $\tilde{\vec{\ell}}^t$ is an unbiased estimator of $\vec{\ell}^t$ by hypothesis.
    \begin{proposition}\label{prop:relationship Rtilde RQ app}
        At all $T$, for all $p\in(0,1)$ and $\vec{\pi} \in \Pi$, with probability at least $1-p$,
        \[
            \tilde{R}^T(\vec{\pi}) \le R_Q^T(\vec{\pi}) + \frac{\Delta}{\beta^T}\nu \,\sqrt{2T\log\frac{1}{p}},
        \]
        where
        \[
            \nu \defeq \sqrt{\frac{1}{T}\sum_{t=1}^T (\rho^t)^2} = \sqrt{\frac{1}{T}\sum_{t=1}^T \max_{z\in Z} \mleft\{ \prod_{(j,a) \leadsto z} \mleft(\frac{\sum_{a'\in A_{j}} h^t(j, a')}{h^t(j,a)}\mright)^{\!\!2}\mright\}}.
        \]
    \end{proposition}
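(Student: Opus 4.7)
The plan is to exhibit $\tilde{R}^T(\vec{\pi}) - R_Q^T(\vec{\pi})$ as the final value of a martingale with bounded conditional increments, then apply Azuma-Hoeffding. Concretely, I would set
\[
    d^t \defeq (\vec{\ell}^t - \tilde{\vec{\ell}}^t)^{\!\top}\! (\vec{\pi} - \vec{y}^t),
\]
so that, directly from the definitions of $\tilde{R}^T$ and $R_Q^T$, one has $\sum_{t=1}^T d^t = \tilde{R}^T(\vec{\pi}) - R_Q^T(\vec{\pi})$.

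The first step is to verify that $(d^t)$ is a martingale difference sequence with respect to the natural filtration $\mathcal{F}_t$ generated by the algorithm's random draws and the environment's choices through time $t$. The comparator $\vec{\pi}$ is fixed; $\vec{y}^t$ is produced by $\cR_Q$ as a function of the past estimators $\tilde{\vec{\ell}}^1,\dots,\tilde{\vec{\ell}}^{t-1}$; and $\vec{\ell}^t$ is the adversary's choice based only on the history. All three are therefore $\mathcal{F}_{t-1}$-measurable, so the only source of conditional randomness in $d^t$ is the leaf $z^t$ sampled at time $t$, which enters through $\tilde{\vec{\ell}}^t$. Unbiasedness of the outcome-sampling estimator (\cref{lem:gradient estimator}) then gives $\bbE[\tilde{\vec{\ell}}^t \mid \mathcal{F}_{t-1}] = \vec{\ell}^t$ and hence $\bbE[d^t \mid \mathcal{F}_{t-1}] = 0$.

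The second step is to bound the conditional range of $d^t$. Since $\vec{\ell}^t, \vec{\pi}, \vec{y}^t$ are $\mathcal{F}_{t-1}$-measurable, $d^t$ is an affine function of $\tilde{\vec{\ell}}^t$, and its conditional range matches that of $-(\tilde{\vec{\ell}}^t)^{\!\top}(\vec{\pi} - \vec{y}^t)$. The inner-product bound recalled in \cref{sec:sparse update} places $(\tilde{\vec{\ell}}^t)^{\!\top}(\vec{\pi} - \vec{y}^t)$ in $[-\Delta \rho^t/\beta^t,\ \Delta \rho^t/\beta^t]$ almost surely; hence $d^t$ lies, conditionally on $\mathcal{F}_{t-1}$, in an interval of length at most $L^t \defeq 2\Delta \rho^t/\beta^t$.

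The final step is to invoke the one-sided Azuma-Hoeffding inequality for martingale differences with bounded conditional range, which gives with probability at least $1-p$
\[
    \sum_{t=1}^T d^t \,\le\, \sqrt{\tfrac{1}{2}\log(1/p)\sum_{t=1}^T (L^t)^2}.
\]
Using $\beta^t \ge \beta^T$ (non-increasingness of the exploration schedule) to pull $1/\beta^T$ out of the sum, and the definitional identity $\sum_{t=1}^T (\rho^t)^2 = T\nu^2$, the right-hand side simplifies to $(\Delta\nu/\beta^T)\sqrt{2 T \log(1/p)}$, which is exactly the claimed bound. The key subtlety is to use Hoeffding's inequality in its per-step range form, keeping $\sum_t (L^t)^2$ in the denominator of the exponent, rather than a uniform-bound form that would replace each $L^t$ by $\max_t L^t$: only the former yields the $\nu\sqrt{T} = \sqrt{\sum_t(\rho^t)^2}$ dependence appearing in the proposition, as opposed to a much weaker $(\max_t \rho^t)\sqrt{T}$ factor that would make the resulting regret bound useless when $\rho^t$ can be large at early iterations.
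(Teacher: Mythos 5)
Your proposal is correct and follows essentially the same route as the paper's proof: the same martingale difference $d^t = (\vec{\ell}^t - \tilde{\vec{\ell}}^t)^{\!\top}(\vec{\pi}-\vec{y}^t)$, the same per-step range $2\Delta\rho^t/\beta^t$, the same use of the non-increasing $\beta^t$ to pull out $1/\beta^T$, and the same per-step sum-of-squares form of Azuma--Hoeffding yielding $(\Delta\nu/\beta^T)\sqrt{2T\log(1/p)}$. The only (harmless) difference is cosmetic: you bound the conditional interval length of $d^t$ via its affine dependence on $\tilde{\vec{\ell}}^t$, whereas the paper bounds $|d^t|$ by a triangle inequality $\Delta + \Delta\rho^t/\beta^t \le 2\Delta\rho^t/\beta^t$; both give the same quantity, and your bookkeeping of range versus absolute bound is, if anything, the more careful of the two.
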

    \begin{proof}
        Introduce the discrete-time stochastic process
        \[
            d^t \defeq (\vec{\ell}^t - \tilde{\vec{\ell}}^t)^{\!\top} (\vec{\pi} - \vec{y}^t), \quad t \in \{1,2,\dots\}.
        \]
        From the online learning model hypotheses, we have that $\vec{\ell}^t$ and $\tilde{\vec{\ell}}^t$ are conditionally independent from $\vec{y}^t$, and since $\bbE_t[\tilde{\vec{\ell}}^t]=\vec{\ell}^t$ by construction, $d^t$ is a martingale difference sequence. At each $t$, the conditional range of $d^t$ is upper bounded by
        \[
            |d^t| \le |(\vec{\ell}^t)^\top (\vec{y}^t - \vec{\pi})| + |(\tilde{\vec{\ell}}^t)^\top (\vec{y}^t - \vec{\pi})| \le \Delta + \frac{\Delta}{\beta^t}\rho^t \le \frac{2\Delta}{\beta^t}\rho^t,
        \]
        where the last inequality follows since $\beta^t \le 1$ and $\rho^t \ge 1$.
        Hence, the sum of quadratic ranges for the martingale difference sequence is
        \begin{align*}
            r &\defeq \sqrt{\frac{1}{T}\sum_{t=1}^T \mleft(\frac{2\Delta}{\beta^t}\rho^t\mright)^{\!2}}
              \le \sqrt{\frac{1}{T}\sum_{t=1}^T \mleft(\frac{2\Delta}{\beta^T}\rho^t\mright)^{\!2}} = \frac{2\Delta}{\beta^T}\nu,
        \end{align*}
        where we used the fact that the $\beta^t$ are (weakly) decreasing.
        Using the Azuma-Hoeffding inequality, we obtain
        \begin{align*}
            1-p &\le \bbP\mleft[\sum_{t=1}^T d^t \le r\,\sqrt{\frac{T}{2}\log\frac{1}{p}}\mright]\\
                &= \bbP\mleft[\sum_{t=1}^T d^t \le \frac{\Delta}{\beta^T} \nu \,\sqrt{2T\log\frac{1}{p}}\mright].\numberthis{eq:azuma 2}
        \end{align*}
        Finally,
        \begin{align*}
            \sum_{t=1}^T d^t &= \sum_{t=1}^T (\vec{\ell}^t - \tilde{\vec{\ell}}^t)^{\!\top} (\vec{y}^t - \vec{\pi})\\
                &= \sum_{t=1}^T (\vec{\ell}^t)^{\!\top} (\vec{y}^t - \vec{\pi}) - \sum_{t=1}^T (\tilde{\vec{\ell}}^t)^{\!\top} (\vec{y}^t - \vec{\pi})\\
                &= \tilde{R}^T(\vec{\pi}) - R_Q^T(\vec{\pi}),\numberthis{eq:sum dt 2}
        \end{align*}
        and substituting \eqref{eq:sum dt 2} into \eqref{eq:azuma 2} we have
        \[
            \bbP\mleft[\tilde{R}^T(\vec{\pi}) - R_Q^T(\vec{\pi}) \le \frac{\Delta}{\beta^T}\nu \,\sqrt{2T\log\frac{1}{p}}\mright] \ge 1-p.
        \]
        Rearranging the terms inside of the square brackets yields the statement.
    \end{proof}

    \subsection{Regret Analysis for the Overall Algorithm}

    Combining \cref{prop:relationship Rtilde RQ} and \cref{prop:relationship R Rtilde} using the union bound lemma, we obtain the following (note that the fractions $1/p$ inside of the logarithms have become $2/p$ as a consequence of the union bound lemma).
    \begin{corollary}
        At all $T$, for all $p\in(0,1)$ and $\vec{\pi} \in \Pi$, with probability at least $1-p$,
        \[
            R^T\!(\vec{\pi}) \le R_Q^T(\vec{\pi}) + \frac{\Delta}{\beta^T} (1 + \nu) \sqrt{2T \log\frac{2}{p}} + \Delta \sum_{t=1}^T \beta^t.
        \]
    \end{corollary}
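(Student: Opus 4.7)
The plan is to obtain the corollary as a direct consequence of combining \cref{prop:relationship R Rtilde} and \cref{prop:relationship Rtilde RQ} via the union bound. Each proposition, as stated, guarantees a certain inequality with probability at least $1-p$, so if I instantiate each of them with the tail parameter $p/2$ in place of $p$, the two failure events have combined mass at most $p$. Hence, simultaneously with probability at least $1-p$, both inequalities hold. The only cosmetic change this causes in the conclusions is that each $\log(1/p)$ becomes $\log(2/p)$, which is exactly the logarithmic factor appearing in the corollary.

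Next, I would chain the two bounds on this common good event. \cref{prop:relationship R Rtilde}, applied with $p/2$, gives
\[
    R^T(\vec{\pi}) \le \tilde{R}^T(\vec{\pi}) + \Delta\sqrt{2T\log\frac{2}{p}} + \Delta\sum_{t=1}^T \beta^t,
\]
while \cref{prop:relationship Rtilde RQ}, applied with $p/2$, gives
\[
    \tilde{R}^T(\vec{\pi}) \le R_Q^T(\vec{\pi}) + \frac{\Delta}{\beta^T}\nu\sqrt{2T\log\frac{2}{p}}.
\]
Substituting the second into the first and grouping the two $\sqrt{2T\log(2/p)}$ terms produces
\[
    R^T(\vec{\pi}) \le R_Q^T(\vec{\pi}) + \mleft(\Delta + \frac{\Delta\nu}{\beta^T}\mright)\sqrt{2T\log\frac{2}{p}} + \Delta\sum_{t=1}^T \beta^t.
\]

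To match the exact form in the statement, I would finally invoke the elementary fact that $\beta^T \le 1$, so $\Delta \le \Delta/\beta^T$, and therefore
\[
    \Delta + \frac{\Delta\nu}{\beta^T} \le \frac{\Delta}{\beta^T}(1+\nu).
\]
Plugging this into the previous display yields the claimed bound. I do not anticipate any genuine obstacle: once both propositions are in place, the only non-mechanical ingredient is the union bound, and the only algebraic subtlety is the monotonicity of $\beta^t$ used to harmonize the two concentration prefactors into the single $(1+\nu)\Delta/\beta^T$ multiplier.
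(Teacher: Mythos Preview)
Your proposal is correct and matches the paper's own proof essentially step for step: instantiate \cref{prop:relationship R Rtilde} and \cref{prop:relationship Rtilde RQ} with tail parameter $p/2$, apply the union bound, chain the two inequalities, and then use $\beta^T \le 1$ to absorb the standalone $\Delta$ term into $\Delta/\beta^T$, yielding the $(1+\nu)$ factor. The only minor quibble is that your closing remark attributes the final step to ``monotonicity of $\beta^t$,'' whereas what you actually invoke is simply $\beta^T \le 1$; monotonicity was already used inside the proof of \cref{prop:relationship Rtilde RQ}, not here.
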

    where $\nu$ is as in \cref{prop:relationship Rtilde RQ app}.
    \begin{proof}
        From \cref{prop:relationship Rtilde RQ} and \cref{prop:relationship R Rtilde}, respectively, we have that
        \begin{align*}
            \frac{p}{2} &\ge \bbP\mleft[R^T(\vec{\pi}) \ge \tilde{R}^T(\vec{\pi}) + \Delta\mleft(\sqrt{2T\log\frac{2}{p}} + \sum_{t=1}^T \beta^t\mright)\mright]\\
            \frac{p}{2} &\ge \bbP\mleft[\tilde{R}^T(\vec{\pi}) \ge R_Q^T(\vec{\pi}) + \frac{\Delta}{\beta^T}\nu\,\sqrt{2T\log\frac{2}{p}}\mright].
        \end{align*}
        Summing the two inequalities and using the union bound, we obtain
        \begin{align*}
            p &\ge \bbP\mleft[R^T(\vec{\pi}) \ge \tilde{R}^T(\vec{\pi}) + \Delta \mleft(\sqrt{2T\log\frac{2}{p}} + \sum_{t=1}^T \beta^t\mright)\mright] + \bbP\mleft[\tilde{R}^T(\vec{\pi}) \le R_Q^T(\vec{\pi}) + \frac{\Delta}{\beta^T}\nu\sqrt{2T\log\frac{2}{p}}\mright]\\
             &\ge \bbP\mleft[\mleft(R^T(\vec{\pi}) \ge \tilde{R}^T(\vec{\pi}) + \Delta \mleft(\sqrt{2T\log\frac{2}{p}} + \sum_{t=1}^T \beta^t\mright)\mright) \lor \mleft(\tilde{R}^T(\vec{\pi}) \le R_Q^T(\vec{\pi}) + \frac{\Delta}{\beta^T}\nu\sqrt{2T\log\frac{2}{p}}\mright)\mright]\\
             &\ge \bbP\mleft[
                R^T(\vec{\pi})
                +
                \tilde{R}^T(\vec{\pi}) \ge \mleft(\tilde{R}^T(\vec{\pi}) + \Delta \mleft(\sqrt{2T\log\frac{2}{p}} + \sum_{t=1}^T \beta^t\mright)\mright) + \mleft(R_Q^T(\vec{\pi}) + \frac{\Delta}{\beta^T}\nu\sqrt{2T\log\frac{2}{p}}\mright)\mright]\\
            &= \bbP\mleft[\R^T\!(\vec{\pi}) \ge R_Q^T(\vec{\pi}) + \frac{\Delta}{\beta^T}(\beta^T + \nu) \sqrt{2T \log\frac{2}{p}} + \Delta \sum_{t=1}^T \beta^t\mright]\\
            &\ge \bbP\mleft[\R^T\!(\vec{\pi}) \ge R_Q^T(\vec{\pi}) + \frac{\Delta}{\beta^T}(1 + \nu) \sqrt{2T \log\frac{2}{p}} + \Delta \sum_{t=1}^T \beta^t\mright].
        \end{align*}
        Taking complements yields the statement.
    \end{proof}

    \thmregretspecific*
        \begin{proof}Fix $T \in \{1,2,\dots\}$. At all times $t \in \{1,\dots,T\}$, the norm that enter regret minimizer $\cR_j$ has norm upper bounded by

            \[
                \|\vec{\ell}^t_j\| \le \frac{\Delta}{\beta^t}\rho^t \le
                  \Delta \max_{t=1}^T\{\rho^t \} \max\mleft\{\frac{1}{k},1\mright\} T^{1/4},
            \]
            where we used the fact that $\beta^T \ge \min\{k,1\}T^{-1/4}$ for all $T$. Letting $\tilde{M} \defeq \max_{t=1}^T \rho^t$ and $\tilde{k} \defeq \max\mleft\{\frac{1}{k},1\mright\}$, we obtain $\|\vec{\ell}_j^t\| \le \Delta \tilde{M}\tilde{k} T^{1/4}$.

            Let $c_j$ be the constant in the regret guarantee for the $O(T^{1/2})$ regret of $\cR_j$. Since regret guarantees are always with respect to some ball of gradients of bounded norm (say, norm $1$) the regret cumulated by $\cR_j$ is bounded as $\max_{\hat{\vec{\pi}}_j} R_j^T(\hat{\vec{\pi}}_j) \le c_j \|\vec{\ell}^t_j\| T^{1/2}$. In other words, we need to keep into account the degradation factor due to the fact that the norms of the gradients constructed through \cref{lem:gradient estimator} might exceed the bound for which the regret guarantee for $\cR_j$ was given. So, in particular,
            \[
                \max_{\hat{\vec{\pi}}_j} R_j^T(\hat{\vec{\pi}}_j) \le c_j \mleft(\Delta\tilde{M}\tilde{k} T^{1/4}\mright) T^{1/2} = \Delta \tilde{M} \tilde{k} c_j T^{3/4},
            \]
            and so
            \begin{equation}\label{eq:bound1}
                \max_{\vec{q} \in Q}\mleft\{\sum_{j \in \cJ} q[p_j]\cdot \max_{\hat{\vec{\pi}}_j} R_j^T(\hat{\vec{\pi}}_j)\mright\} \le \sum_{j\in\cJ} \max\left\{0, \max_{\hat{\vec{\pi}}_j} R_j^T(\hat{\vec{\pi}}_j)\right\} \le {\Delta}\tilde{M}\tilde{k}\mleft(\sum_{j\in\cJ}c_j\mright)\, T^{3/4}.
            \end{equation}
            Furthermore,
            \begin{equation}\label{eq:bound2}
                \frac{\Delta}{\beta^T}(1+\nu)\sqrt{2T\log\frac{2}{p}} \le \Delta\tilde{k}T^{1/4}(1+\nu)\sqrt{2T\log\frac{2}{p}} = \Delta (1+\nu)\tilde{k} T^{3/4} \sqrt{\log\frac{2}{p}}.
            \end{equation}
            Finally,
            \begin{equation}\label{eq:bound3}
                \Delta\sum_{t=1}^T \beta^t = \Delta \sum_{t=1}^T\min\{1, k \cdot t^{-1/4}\} \le \Delta k \sum_{t=1}^T t^{-1/4} \le \Delta k \int_{0}^{T} t^{-1/4} dt = \frac{4\Delta k}{3} T^{3/4}.
            \end{equation}
            Substituting the bounds~\eqref{eq:bound1},~\eqref{eq:bound2}, and~\eqref{eq:bound3} into the general result of \cref{thm:regret bound general}, we obtain
           \begin{align*}
             R^T(\vec{\pi}) &\le \Delta\tilde{M}\tilde{k}\mleft(\sum_{j\in\cJ}c_j\mright)\, T^{3/4} + \Delta (1+\nu)\tilde{k} T^{3/4} \sqrt{\log\frac{2}{p}} + \frac{3\Delta k}{4} T^{3/4}\\
                &= {\Delta}\mleft(\frac{4k}{3} + (1+\nu)\tilde{k}\sqrt{\log\frac{2}{p}} + \tilde{k}\tilde{M}\sum_{j\in\cJ}c_j \mright) T^{3/4}\\
                &\le {\Delta}\mleft(\frac{4k}{3\log 2} + (1+\nu)\tilde{k} + \frac{\tilde{k}}{\log 2}\tilde{M}\sum_{j\in\cJ}c_j \mright) T^{3/4}\sqrt{\log\frac{2}{p}},
           \end{align*}
           where we used the fact that $\log(2/p) \ge \log 2$ for all $p \in (0,1)$. Setting $c$ to be the quantity in parentheses and noting that when $h^t$ measures the number of terminal states in each subtree $\nu \le |\Sigma| - 1$ (\cref{lem:rho for balanced}), we obtain the statement.
        \end{proof}

        \begin{remark}
            \cref{thm:regret specific} shows that our algorithm achieves $O(T^{3/4} \sqrt{\log(1/p)}$ regret with high probability (specifically, probability at least $1-p$). Indeed, using the properties of logarithms, $\sqrt{\log(2/p)} = \sqrt{\log(1/p) + \log 2} \le 2\sqrt{\log(1/p)} = O(\sqrt{\log(1/p)})$ for all $p \le 1/2$.
        \end{remark}

    \subsection{Handling the Unknown Structure}

    As already noted in Section~\ref{sec:sparse update}, only the local full-feedback regret minimizers $\cR_j$ on the path from the root to $z^t$ observe a nonzero counterfactual gradient, which is computed as in the \emph{regret update phase} described in \cref{sec:algorithm}. All other local regret minimizers see a zero gradient, and therefore their regret does not increase. So, we can safely avoid updating those regret minimizers that are not on the path from the root to the most recent terminal leaf.

    The upfront-flipping sampling scheme and the on-path-flipping sampling scheme can both be implemented so that the actions are sampled incrementally as the interaction with the environment progresses. In other words, there is no need to sample actions at all decision nodes upfront in order to interact with the environment. In the body of this paper, we only described how to do so for the upfront-flipping sampling scheme, which is conceptually the most important, as it showcases well the difference with, for example, epsilon-greedy exploration used by MCCFR. 

    \section{Additional Details about Experiments}\label{app:games}

\subsection{Description of Games}

Here, we review the two standard benchmark games that we use in the experiments.

In Kuhn poker~\citep{Kuhn50:Simplified}, two players put an ante worth $1$ into the pot at the beginning of the game. Then, each player is privately dealt one card from a deck that contains only three cards---specifically, jack, queen, and king. Then Player $1$ decides to either check or bet $1$. If Player~1 checks, Player 2 can decide to either check or raise $1$. If Player 2 checks, a showdown occurs. If Player 2 raises, Player 1 can fold---and the game ends with Player 2 taking the pot---or call, at which point a showdown occurs. Otherwise, if the first action of Player~1 is to raise, then Player 2 may fold (the game ends with Player~1 taking the pot) or call, at which point a showdown occurs. In a showdown, the player with the higher card wins the pot and the game ends.

\emph{Leduc poker}~\citep{Southey05:Bayes} is played with a deck of 3 unique
ranks, each appearing twice in the deck. There are two rounds in the game. In the
first round, all players put an ante of $1$ in the pot and are privately dealt a single card. A round of betting then starts. Player 1 acts first, and at most two bets are allowed per player. Then, a card is publicly revealed, and another
round of betting takes place, with the same dynamics described above. After the two betting round, if one of the players has a pair with the public card, that player
wins the pot. Otherwise, the player with the higher card wins the pot. All bets in the first
round are worth $2$, while all bets in the second round are $4$.

\subsection{Hyperparameter Tuning for the Policy Gradient Approach by \citet{Srinivasan18:Actor}}

We acknowledge the help of some of the original authors \citep{Srinivasan18:Actor}
in tuning hyperparameters for their algorithm on the Leduc poker benchmark.

For both the RPG and the QPG formulation, the following combinations of hyperparameters was tested:
\begin{itemize}[leftmargin=8mm,itemsep=2mm,nolistsep]
    \item Critic learning rate: $\{0.001, 0.05, 0.01\}$;
    \item Pi learning rate: $\{0.001, 0.01, 0.05\}$;
    \item Entropy cost (used to multiply the entropy loss): $\{0.01, 0.1\}$;
    \item Batch size (for Q and Pi learning): $\{16,64,128\}$;
    \item Num critics (before each Pi update): $\{16,64,128\}$.
\end{itemize}

In \cref{table:hyper1,table:hyper2} we report the average exploitability (computed across 5 independent runs for each choice of hyperparameters) for the top 5 choices of hyperparameters for the RPG and QPG formulation, respectively. The experimental results in the body show data for the best combination of eitherparameters (first row of each table).

\begin{table}[H]\centering
    \scalebox{.95}{\begin{tabular}{rrrrr|r}
        \toprule
        Critic learning rate & Pi learning rate & Entropy cost & Batch size & Num Critics & Avg. exploitability\\
        \midrule
        0.01  & 0.05 & 0.01 & 128 &  64 & 0.448680\\
        0.01  & 0.05 & 0.01 &  64 &  64 & 0.451409\\
        0.001 & 0.01 & 0.01 &  16 & 128 & 0.472729\\
        0.01  & 0.01 & 0.01 & 128 &  16 & 0.493694\\
        0.05  & 0.01 & 0.01 & 128 &  64 & 0.494840\\
        \bottomrule
\end{tabular}}
\caption{Performance of PG with RPG gradient formulation using the top 5 combinations of hyperparameters.}
\label{table:hyper1}
\end{table}

\begin{table}[H]\centering
    \scalebox{.95}{\begin{tabular}{rrrrr|r}
        \toprule
        Critic learning rate & Pi learning rate & Entropy cost & Batch size & Num Critics & Avg. exploitability\\
        \midrule
        0.01  & 0.05 & 0.01 &  64 & 128 & 0.421745\\
        0.01  & 0.01 & 0.01 &  64 &  16 & 0.451535\\
        0.001 & 0.01 & 0.01 &  16 & 128 & 0.457430\\
        0.01  & 0.01 & 0.01 &  64 & 128 & 0.471100\\
        0.05  & 0.01 & 0.01 & 128 &  64 & 0.524005\\
        \bottomrule
\end{tabular}}
\caption{Performance of PG with QPG gradient formulation using the top 5 combinations of hyperparameters.}
\label{table:hyper2}
\end{table}

\subsection{Additional Experiments}

In \cref{fig:kuhn conv,fig:kuhn coin,fig:leduc conv,fig:leduc coin} we study how the uniform and balanced exploration functions compare in Kuhn and Leduc poker, for both the on-path-flipping and the upfront-flipping sampling scheme. We test different values of the exploration multiplier $k$ (\cref{thm:regret specific}), specifically $k \in \{0.5,1,10,20\}$.

The experiments show that the on-path-flipping sampling schemes leads to significantly less variance than the upfront-flipping sampling scheme. Sometimes the latter leads to better average convergence very early on in the learning, but overall has worse convergence in practice.

The experiments show no meaningful different between the uniform and balanced exploration strategies.

\cref{fig:mccfr} shows how MCCFR performed with different exploration parameters. 

\begin{figure}[H]
  \centering
  \includegraphics[scale=.65]{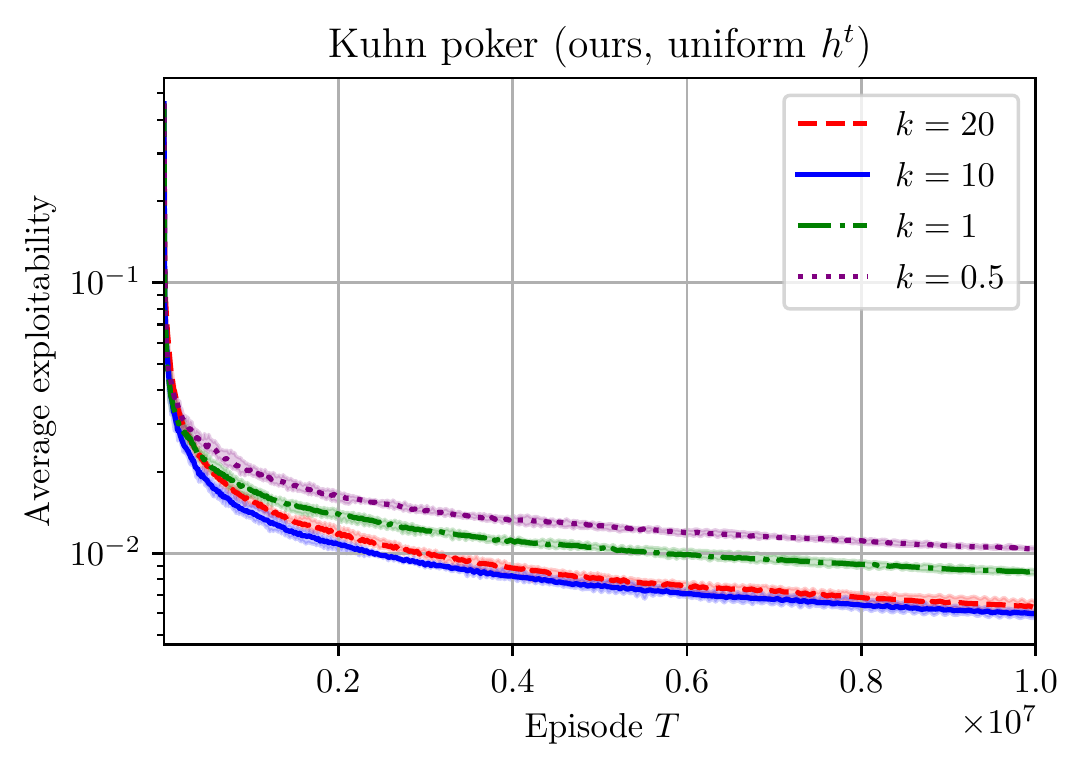}%
  \includegraphics[scale=.65]{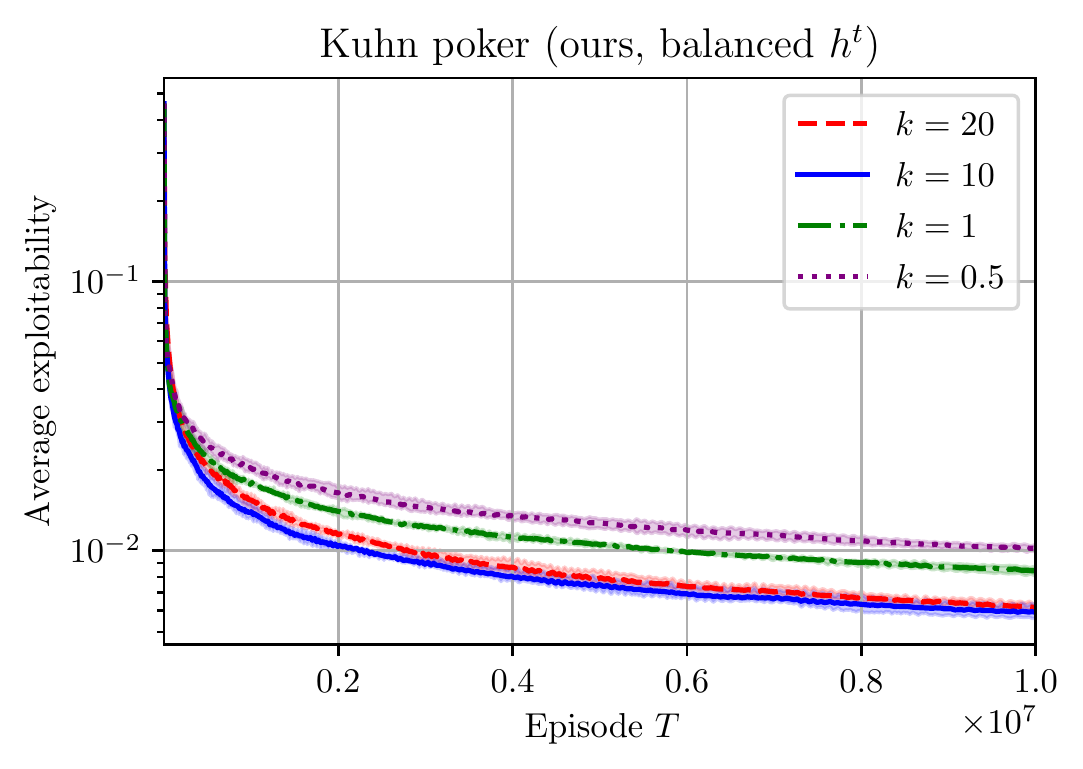}
  \caption{Comparison between uniform and balanced exploration strategies in Kuhn poker, for different values of the exploration multipler $k$, when using the on-path-flipping sampling scheme.}\label{fig:kuhn conv}
\end{figure}

\begin{figure}[H]
  \centering
  \includegraphics[scale=.65]{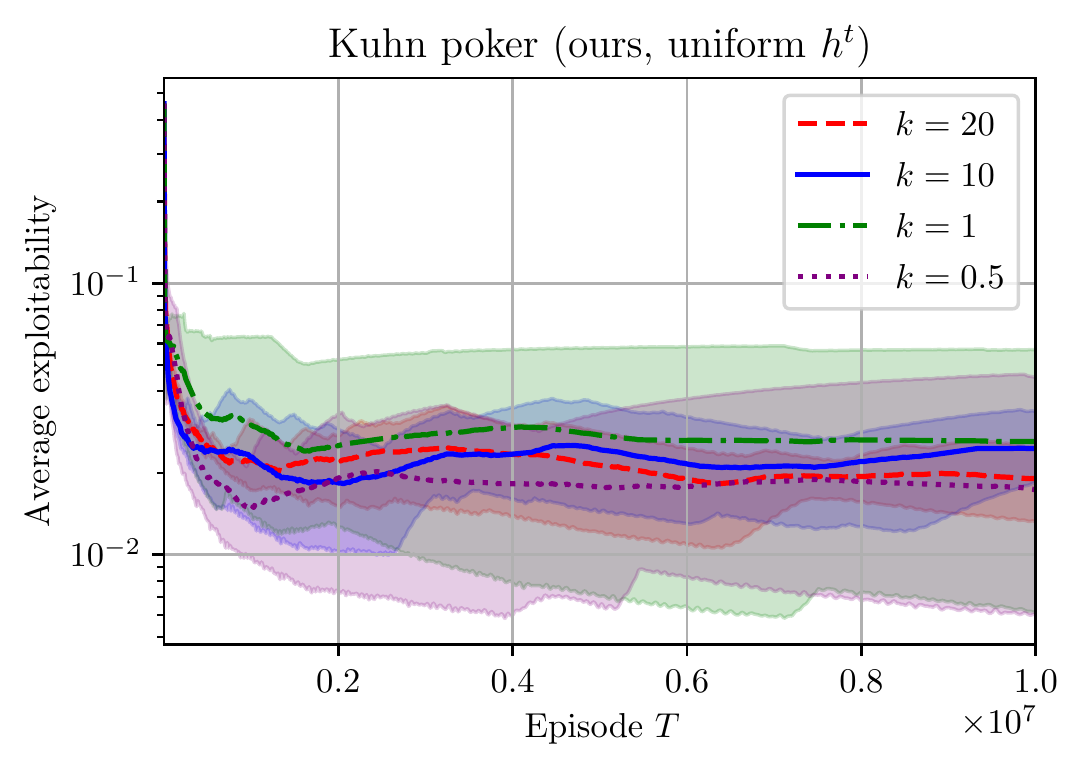}%
  \includegraphics[scale=.65]{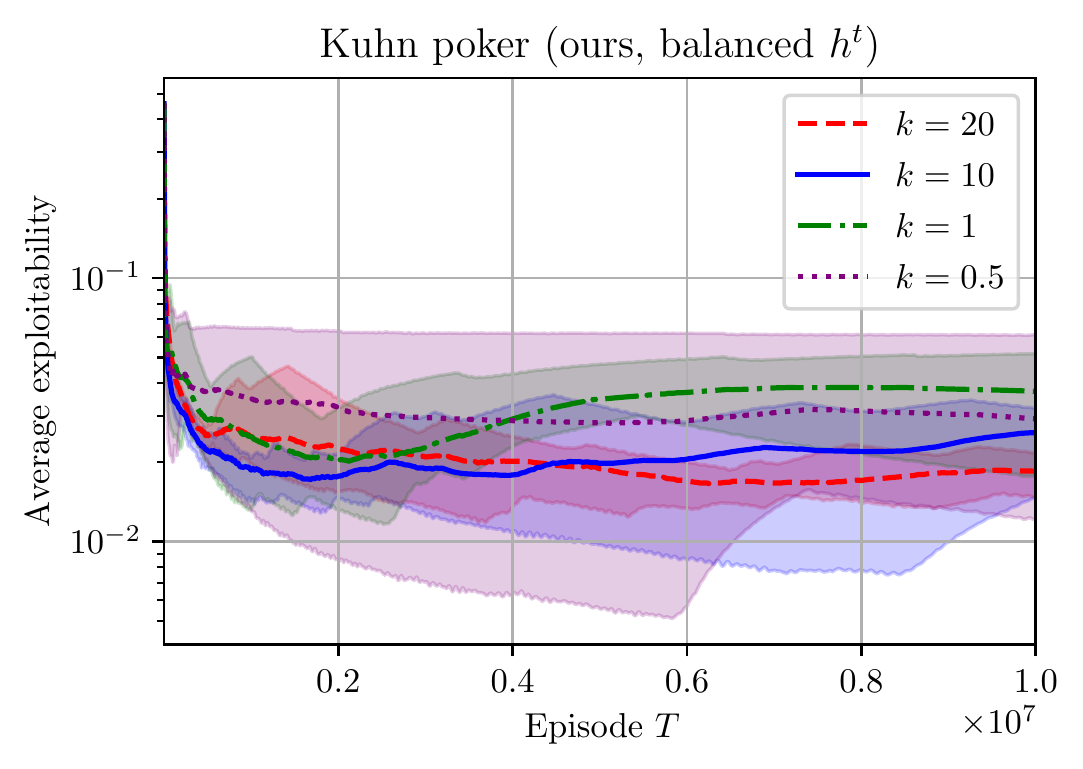}
  \caption{Comparison between uniform and balanced exploration strategies in Kuhn poker, for different values of the exploration multipler $k$, when using the upfront-flipping sampling scheme.}\label{fig:kuhn coin}
\end{figure}
\begin{figure}[H]
  \centering
  \includegraphics[scale=.65]{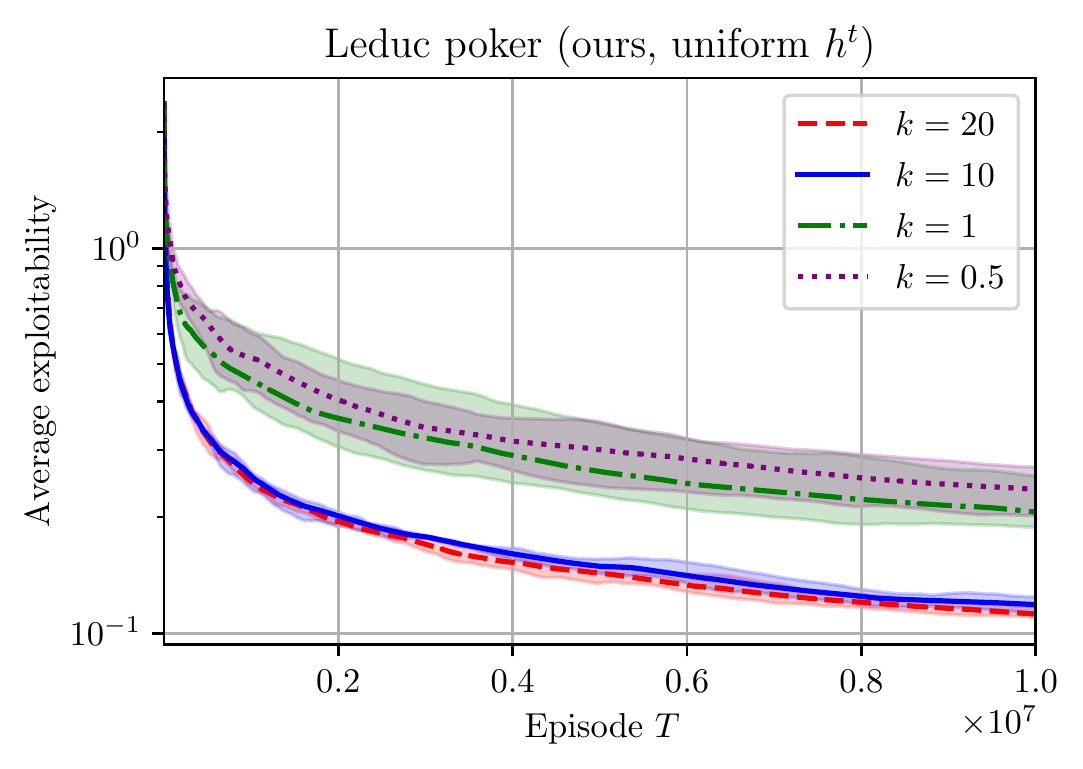}%
  \includegraphics[scale=.65]{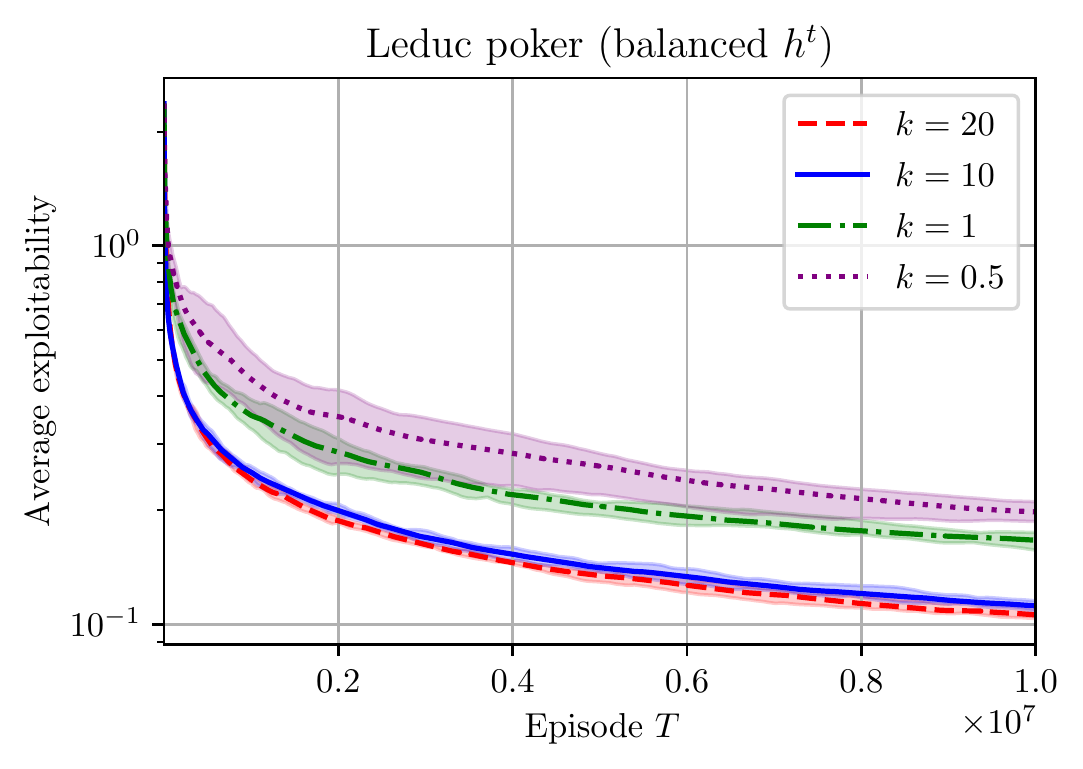}
  \caption{Comparison between uniform and balanced exploration strategies in Leduc poker, for different values of the exploration multipler $k$, when using the on-path-flipping sampling scheme.}\label{fig:leduc conv}
\end{figure}

\begin{figure}[H]
  \centering
  \includegraphics[scale=.65]{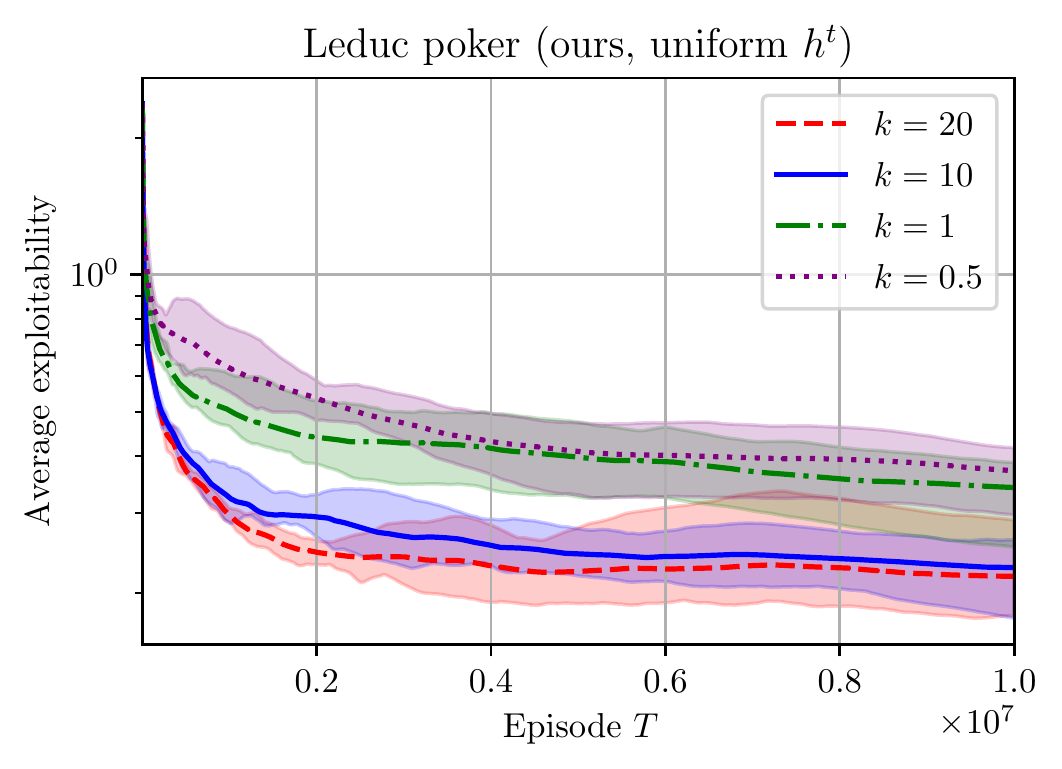}%
  \includegraphics[scale=.65]{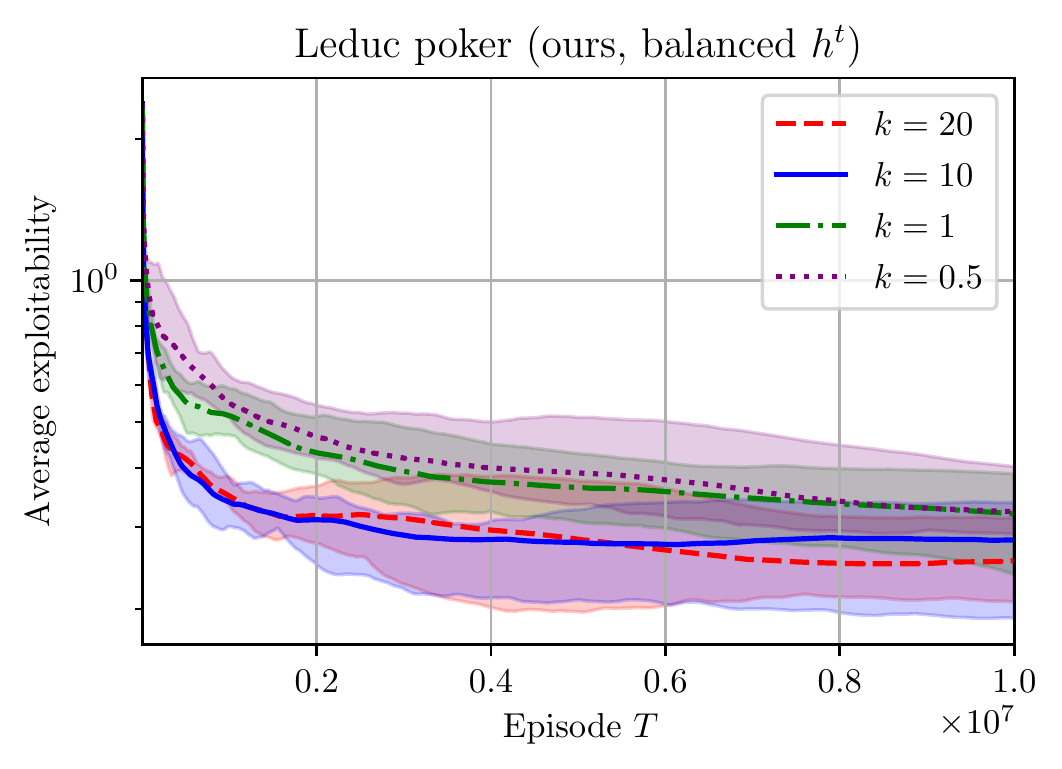}
  \caption{Comparison between uniform and balanced exploration strategies in Leduc poker, for different values of the exploration multipler $k$, when using the upfront-flipping sampling scheme.}\label{fig:leduc coin}
\end{figure}
\begin{figure}[H]
  \centering
  \includegraphics[scale=.65]{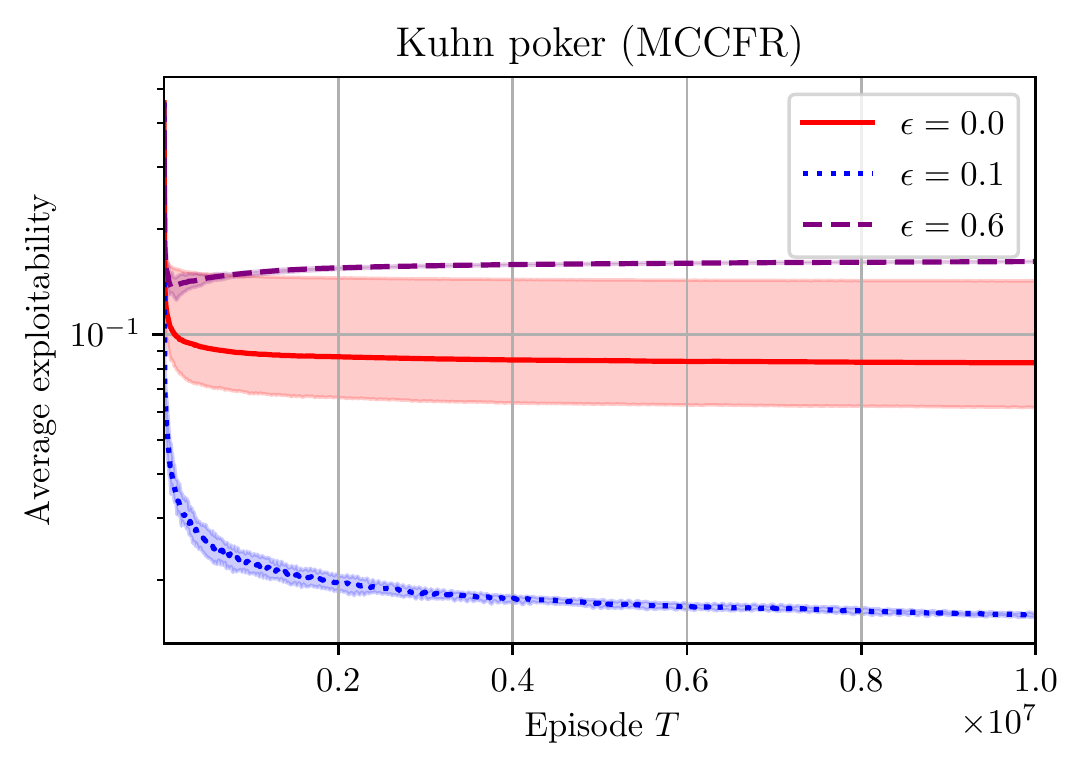}%
  \includegraphics[scale=.65]{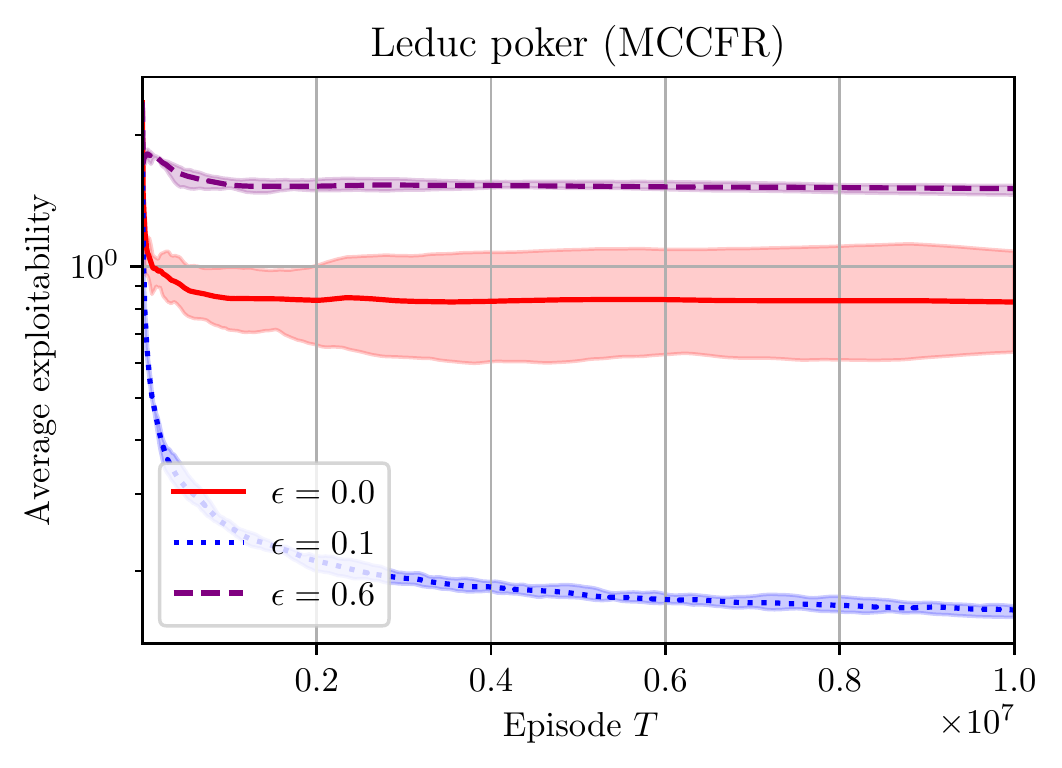}
  \caption{Performance of MCCFR for different amounts of $\epsilon$-greedy exploration.}\label{fig:mccfr}
\end{figure} 

\fi
\end{document}